\newcommand{\bra}[1]{\langle #1|}
\newcommand{\ket}[1]{|#1\rangle}
\newcommand{\cent}[0]{\mbox{\textcent}}
\newcommand{\dollar}[0]{\$}
\newcommand{\rightstate}[1]{\overrightarrow{#1}}
\newcommand{\stopstate}[1]{\mspace{0mu}\downarrow\mspace{-5mu} #1}
\newtheorem{theorem}{Theorem}
\newtheorem{lemma}{Lemma}
\newtheorem{corollary}{Corollary}
\newtheorem{conjecture}{Conjecture}
\newtheorem{question}{Question}
\begin{document}

\title{SUPERIORITY OF ONE-WAY AND REALTIME QUANTUM MACHINES AND NEW DIRECTIONS\footnote{This work was
partially supported by T\"{U}B\.ITAK with grant 108E142 and FP7 FET-Open project QCS.}} 
\author{Abuzer Yakary{\i}lmaz}
\institute{
University of Latvia, Faculty of Computing, Raina bulv. 19, Riga, LV-1586, Latvia\\
Email: \texttt{abuzer@lu.lv}
}

\maketitle

\begin{abstract}
In automata theory, the quantum computation has been widely examined for finite state machines,
known as quantum finite automata (QFAs), and
less attention has been given to the QFAs augmented with counters or stacks.
Moreover, to our knowledge, there is no result related to QFAs having more than one input head.
In this paper, we focus on such generalizations of QFAs whose input head(s) operate(s) in 
one-way or realtime mode and present many superiority of them to their classical counterparts.
Furthermore, we propose some open problems and conjectures 
in order to investigate the power of quantumness better.
We also give some new results on classical computation.
\\ \\
\textbf{keywords:} quantum computation, randomization, quantum automata, pushdown automaton, blind counter automaton, multihead finite automaton, nondeterminism, bounded error
\end{abstract}

\thispagestyle{empty}

% AAAAAAAAAAAAAAAAAAAAAAAAAAAAAAAAAAAAAAAAAAAAAAAAAAAAAAAAAAAAAAAAAAAAAAAAAAAAAAAA % 
% AAAAAAAAAAAAAAAAAAAAAAAAAAAAAAAAAAAAAAAAAAAAAAAAAAAAAAAAAAAAAAAAAAAAAAAAAAAAAAAA % 

% SSSSSSSSSSSSSSSSSSSSSSSSSSSSSSSSSSSSSSSSSSSSSSSSSSSSSSSSSSSSSSSSSSSSSSSSSSSSSSSS %
% SSSSSSSSSSSSSSSSSSSSSSSSSSSSSSSSSSSSSSSSSSSSSSSSSSSSSSSSSSSSSSSSSSSSSSSSSSSSSSSS %
% SSSSSSSSSSSSSSSSSSSSSSSSSSSSSSSSSSSSSSSSSSSSSSSSSSSSSSSSSSSSSSSSSSSSSSSSSSSSSSSS %
\section{Introduction}
% SSSSSSSSSSSSSSSSSSSSSSSSSSSSSSSSSSSSSSSSSSSSSSSSSSSSSSSSSSSSSSSSSSSSSSSSSSSSSSSS %
% SSSSSSSSSSSSSSSSSSSSSSSSSSSSSSSSSSSSSSSSSSSSSSSSSSSSSSSSSSSSSSSSSSSSSSSSSSSSSSSS %
% SSSSSSSSSSSSSSSSSSSSSSSSSSSSSSSSSSSSSSSSSSSSSSSSSSSSSSSSSSSSSSSSSSSSSSSSSSSSSSSS %

Quantum computation is a generalization of classical computation \cite{Wa09,YS11A}.
Therefore, it is interesting to investigate the cases in which quantum computation is superior to 
classical computation.
In automata theory,
many superiority results have been obtained mostly for quantum finite automata\footnote{They are
also known as quantum Turing machines with constant space.} (QFAs) \cite{KW97,AF98,BC01B,NIH01,AW02,BP02,MP02,BMP05,YS09C,SY10A,YS10A,YS10B,SY11A,YS11A,YS11B} 
and a few for QFAs with counters \cite{Kr99,BFK01,YKTI02,YKI05,YFSA11A,SY11C} and for QFAs with a stack \cite{Go00,NIH01,MNYW05,NHK06}.

In this paper, we present many new results about how the quantumness and in some cases randomness 
adds power to the one-way and realtime computational models, 
i.e. multihead finite and pushdown automata, counter automata, etc.
Then, we present some open problems and conjectures for the further investigations.
We also give some new results about classical computation.

Due to their restricted  definitions, 
early QFA models and their variants 
were shown to be less powerful than their classical counterparts for many cases \cite{KW97,AF98,MC00,ANTV02,YKI05}.
In fact, these models do not reflect the full power of quantum computation \cite{Wa09B}.
Therefore, we use ``modern" definitions for the quantum models (e.g. \cite{Hi08,AY11A}).

After a concise background given in Section 2, we present our results in Section 3,
in which we classify the results under four subsections:
(3.1) nondeterminism, 
(3.2) blind counter automata,
(3.3) multihead finite automata, and
(3.4) multihead pushdown automata.

% SSSSSSSSSSSSSSSSSSSSSSSSSSSSSSSSSSSSSSSSSSSSSSSSSSSSSSSSSSSSSSSSSSSSSSSSSSSSSSSS %
% SSSSSSSSSSSSSSSSSSSSSSSSSSSSSSSSSSSSSSSSSSSSSSSSSSSSSSSSSSSSSSSSSSSSSSSSSSSSSSSS %
% SSSSSSSSSSSSSSSSSSSSSSSSSSSSSSSSSSSSSSSSSSSSSSSSSSSSSSSSSSSSSSSSSSSSSSSSSSSSSSSS %
\section{Background}
% SSSSSSSSSSSSSSSSSSSSSSSSSSSSSSSSSSSSSSSSSSSSSSSSSSSSSSSSSSSSSSSSSSSSSSSSSSSSSSSS %
% SSSSSSSSSSSSSSSSSSSSSSSSSSSSSSSSSSSSSSSSSSSSSSSSSSSSSSSSSSSSSSSSSSSSSSSSSSSSSSSS %
% SSSSSSSSSSSSSSSSSSSSSSSSSSSSSSSSSSSSSSSSSSSSSSSSSSSSSSSSSSSSSSSSSSSSSSSSSSSSSSSS %

We specifically give the definitions of three models in order to trace the proofs presented in the paper:
generalized finite automaton, one-way quantum finite automaton,
and realtime quantum automaton with one-blind counter.
The quantum models are defined based on a generic template that is given in Subsection (3.2).
We refer the reader
to \cite{Ro66,FMR68,Gr78,Si06,HS10} for the definitions of classical machines;
to \cite{Hi08,YS11A} for the definitions of the QFAs generalizing their classical counterparts; and,
to \cite{NC00} for a standard reference of quantum computation.

Throughout the paper, we use the following notations:
$ \Sigma $ not containing $ \cent $ and $ \dollar  $ (the left and right end-markers)
denotes the input alphabet; 
$ \tilde{\Sigma} = \Sigma \cup \{\cent,\dollar\} $;
$ Q $ is the set of (internal) states;
$ q_{0} \in Q $ is the initial state;
$ Q_{a} \subseteq Q $ is the set of accepting states;
$ \delta $ is the transition function;
$ f_{\mathcal{M}}(w) $ is the accepting probability (or value) of machine $ \mathcal{M} $ on $ w $;
$ w_{i} $ is the $ i^{th} $ symbol of $ w $;
$ |w| $ is the length of $ w $;
$ |w|_{\sigma} $ is the number of the occurrence of $ \sigma $ in $ w $,
where $ w \in \Sigma^{*} $.
The list of abbreviations used for models is given below:
\begin{itemize}
	\item the prefixes ``1" and ``rt" stand for \textit{one-way}\footnote{The input head(s) is (are) 
		not allowed to move to left.} 
		and \textit{realtime}\footnote{The input head(s) is (are) allowed to move only to the right.}
		input head(s), respectively;
	\item the letters ``D", ``N", ``P", and ``Q" used after ``1" or ``rt" stand for
		\textit{deterministic}, \textit{nondeterministic}, \textit{probabilistic}, and \textit{quantum}, respectively;
	\item the abbreviations ``FA", ``$ k $FA", ``PDA", ``PD$ k $A", and ``$ k $BCA" stand for
		\textit{finite automaton}, \textit{finite automaton with $ k $ input heads},
		\textit{pushdown automaton},
		\textit{pushdown automaton with $ k $ input heads}, and
		\textit{automaton with $ k $ blind counter(s)}, respectively,
		where $ k > 0 $.
\end{itemize}

For all models (except GFAs), the input $ w \in \Sigma^{*} $
is placed on a read-only two-way infinite tape as $ \tilde{w} = \cent w \dollar $ 
from the squares indexed by 1 to $ |\tilde{w}| $.
At the beginning, the head(s) is (are) initially placed on the square indexed by 1 and
the value(s) of the counter(s) is (are) set to zero.

\subsection{Generalized finite automaton} 

A generalized finite automaton (GFA) \cite{Tu69} is formally a 5-tuple 
\begin{equation*}
	\mathcal{G}=(Q,\Sigma,\{A_{\sigma \in \Sigma} \},v_{0},f),
\end{equation*}
where (i) $ A_{\sigma \in \Sigma} $'s are $ |Q| \times |Q| $-dimensional
real valued transition matrices, and,
(ii) $ v_{0} $ and $ f $ are real valued  \textit{initial} (column) and \textit{final} (row) vectors, respectively.
For an input string, $ w \in \Sigma^{*} $, the acceptance value of $ w $ associated by 
$ \mathcal{G} $ is defined as
\begin{equation*}
	f_{\mathcal{G}}(w)=f A_{w_{|w|}} \cdots A_{w_{1}} v_{0}.
\end{equation*}

\subsection{Generic templete for quantum machines}

Now, we briefly describe a general framework for quantum machines allowing to implement general quantum operators
(see \cite{Ya11A,YS11A} for details).
Each quantum machine has a special component, 
a finite register, not considered as a part of the configurations, with alphabet $ \Omega $ having
a distinguished symbol $ \omega_{1} $ (the initial symbol).
In each step of the transition, 
(i) the register is reset to $ \ket{\omega_{1}} $;
(ii) as a part of the transition, a symbol is written on the register; and,
(iii) the finite register is discarded.
For one-way models, we have a set of outcomes $ \Delta = \{ a,r,n \} $ 
($ \Omega $ is partitioned into there pairwise disjoint subsets, i.e. $ \Omega_{\tau \in \Delta} $)
and, before discarding process,
a projective measurement is applied on the register.
That is, $ P = \{ P_{\tau \in \Delta} \mid P_{\tau} = \sum_{\omega \in \Omega_{\tau}} \ket{\omega} \bra{\omega} \} $,
and the following actions are performed with respect to the outcomes:
(a) or (r) the computation is halted and the input is accepted or rejected, respectively,
(n) the computation continues.
For realtime models, the decision on the input is given after reading the whole input
by a projective measurement, applied on the space spanned by the internal states,
i.e. $ P = \{ P_{a},I-P_{a} \mid P_{a} = \sum_{q \in Q_{a}} \ket{q}\bra{q} \} $.
For the models with blind counters, 
an additional measurement is done on the counters to check whether their values are zero or not.

\begin{figure}[h!]
	\centering
	\begin{minipage}{0.5\textwidth}
		\small
		\begin{equation*}
		\begin{array}{ccccc}
			\multicolumn{1}{c|}{} & c_{1} & 
			\multicolumn{2}{c}{\ldots} & \multicolumn{1}{c|}{ c_{|\mathcal{C}|} } \\
			\hline
			\multicolumn{1}{c|}{c_{1}} & & & & \multicolumn{1}{c|}{} \\
			\multicolumn{1}{c|}{\vdots} & \multicolumn{4}{c|}{ E_{\omega_{1}} } \\
			\multicolumn{1}{c|}{c_{|\mathcal{C}|}} & & & & \multicolumn{1}{c|}{} \\
			\hline	
			\multicolumn{1}{c|}{c_{1}} & & & & \multicolumn{1}{c|}{} \\
			\multicolumn{1}{c|}{\vdots} & \multicolumn{4}{c|}{ E_{\omega_{2}} } \\
			\multicolumn{1}{c|}{c_{|\mathcal{C}|}} & & & & \multicolumn{1}{c|}{} \\
			\hline\multicolumn{1}{c|}{} & & & & \multicolumn{1}{c|}{} \\
			\multicolumn{1}{c|}{} & & & & \multicolumn{1}{c|}{} \\
			\multicolumn{1}{c|}{\vdots} & \multicolumn{4}{c|}{\vdots } \\
			\multicolumn{1}{c|}{} & & & & \multicolumn{1}{c|}{} \\
			\hline	
			\multicolumn{1}{c|}{c_{1}} & & & & \multicolumn{1}{c|}{} \\
			\multicolumn{1}{c|}{\vdots} & \multicolumn{4}{c|}{ E_{\omega_{|\Omega|}} } \\
			\multicolumn{1}{c|}{c_{|\mathcal{C}|}} & & & & \multicolumn{1}{c|}{} \\
			\hline	
		\end{array}
		\end{equation*}
	\end{minipage}
	\caption{Matrix $ \mathsf{E} $}
	\label{figure:matrix-E}
\end{figure}

A quantum machine operates on the space spanned by its configurations.
The computation begins with the initial configuration and continues until terminated.
The transitions between the configurations are determined by the transition function.
Let $ \mathcal{C}_{\mathcal{M}}^{w} $, shortly $ \mathcal{C} $, 
be the configuration set of $ \mathcal{M} $ for a given 
input $ w \in \Sigma^{*} $.
All transitions of $ \mathcal{M} $ on $ w $ can be summarized 
as in Figure \ref{figure:matrix-E}, in which $ E_{\omega \in \Omega} $ represents
all transitions between the configurations when $ \omega $ is written on the register.
To be a well-formed machine, for all $ w \in \Sigma^{*} $, 
the columns of the matrix $ \mathsf{E} $ (Figure \ref{figure:matrix-E})
form an orthonormal set\footnote{
	In fact, matrix $ \mathsf{E} $ (Figure \ref{figure:matrix-E}) is a part of a bigger unitary matrix, 
	say $ \mathsf{U} $,
	that is responsible for the evolution of configuration space joint with the finite register. 
	Since the register is reset to $ \ket{ \omega_{1} } $  in each step, only a part of $ \mathsf{U} $,
	which is exactly $ \mathsf{E} $, operates on the configuration space.
	Therefore, the columns of $ \mathsf{E} $ must be orthonormal.
}, i.e. equivalently, 
\begin{equation*}
	\sum_{\omega \in \Omega} E_{\omega}^{\dagger} E_{\omega} = I,
\end{equation*}
where $ E_{\omega}[j,i] $ denotes the amplitude of transition from the $ i^{th} $ configuration to 
$ j^{th} $ configuration by writing $ \omega $ on the register.

\subsection{One-way quantum finite automata} 

A one-way quantum finite automata (1QFA) \cite{YS11A} is a 7-tuple 
\begin{equation*}
	\mathcal{M} = (Q,\Sigma,\Omega,\delta,q_{0},\Omega_{a},\Omega_{r}),
\end{equation*}
where $ \Omega_{a} $ ($ \Omega_{r} $) is the set of accepting (rejecting) symbols.
The transition of $ \mathcal{M} $ is specified as:
\begin{equation}	
	\delta(q,\sigma) = \alpha (p,d,\omega)  ~~ (\alpha \in \mathbb{C}),
	\label{eq:1QFA-delta}
\end{equation}
where $ \mathcal{M} $, which is in state $ q \in Q $ and reads $ \sigma \in \Sigma $ on the input tape,
changes the internal state to $ p \in Q $, update the position of the input head with respect to 
$ d \in \{\downarrow,\rightarrow \} $, and  writes $ \omega \in \Omega $ on the finite register
with amplitude $ \alpha $.
For simplicity, we assume that range component $ d $ can be determined by component $ p $,
denoted $ \stopstate{p} $ and $ \rightstate{p} $, and 
so term $ d $ can be dropped from Equation \ref{eq:1QFA-delta}.
For a given string $ w \in \Sigma^{*} $,
the configurations of $ \mathcal{M} $ are the pairs
of $ (q,x) \in Q \times \{1,\ldots,|\tilde{w}|\} $ and $ (q_{1},1) $ is the initial one,
where $ x $ stands for the head position.

\subsection{Realtime quantum automaton with 1 blind counter} 

A realtime quantum automata with one blind counter (rtQ1BCA) is a 6-tuple
 \begin{equation*}
	\mathcal{M} = (Q,\Sigma,\Omega,\delta,q_{0},Q_{a}).
\end{equation*}
We assume that $ \mathcal{M} $ can have the capability\footnote{
	It is a well-known fact that (e.g. see \cite{YFSA11A})
	for any classical or quantum counter automata having the capability of updating its counter(s)
	from the set $ \{-m,\ldots,m\} $, there exists an equivalent counter automaton updating 
	its counter(s) from the set $ \{-1,0,1\} $ for any $ m>1 $.
} of updating its counter(s) from the set $ \{-m,\ldots,m\} $ for any fixed $ m>1 $.
The transition of $ \mathcal{M} $ is specified as:
\begin{equation}
	\label{eq:rtQ1BCA-delta}
	\delta(q,\sigma) = \alpha (p,c,\omega) ~~ (\alpha \in \mathbb{C}),
\end{equation}
where $ \mathcal{M} $, which is in state $ q \in Q $ and reads $ \sigma \in \Sigma $ on the input tape,
changes the internal state to $ p \in Q $, update the counter value by $ c \in \{-m,\ldots,m\} $, 
and  writes $ \omega \in \Omega $ on the finite register
with amplitude $ \alpha $.
For a given string $ w \in \Sigma^{*} $, 
the configurations of $ \mathcal{M} $ are the pairs of 
$ (q,v) \in Q \times \mathbb{Z} $ and $ (q_{1},0) $ is the initial one,
where $ v $ stands for the value of the counter.

\subsection{Language recognition}

The language recognition criteria used in the paper can be defined as follows:
\begin{itemize}
	\item A language $ L \subseteq \Sigma^{*} $ is said to be recognized by $ \mathcal{M} $ with error bound 
		$ \epsilon \in (0,\frac{1}{2}) $ if 
		(i) $ f_{\mathcal{M}}(w) \geq 1-\epsilon $ for $ w \in L $ and
		(ii) $ f_{\mathcal{M}}(w) \leq \epsilon $ for $ w \notin L $.
	\item A language $ L \subseteq \Sigma^{*} $ is said to be recognized by $ \mathcal{M} $ with 
		negative one-sided error bound $ \epsilon \in (0,1) $ if 
		(i) $ f_{\mathcal{M}}(w) = 1 $ for $ w \in L $ and
		(ii) $ f_{\mathcal{M}}(w) \leq \epsilon $ for $ w \notin L $.
	\item A language $ L \subseteq \Sigma^{*} $ is said to be recognized by $ \mathcal{M} $ in nondeterministic mode
		if  (i) $ f_{\mathcal{M}}(w) > 0 $ for $ w \in L $ and
		(ii) $ f_{\mathcal{M}}(w) = 0 $ for $ w \notin L $.
\end{itemize}
Note that, any negative one-sided error bound $ \frac{1}{2} $ can be easily converted 
to the general error bound $ \frac{1}{3} $ and 
any negative one-sided error bound in internal $ (\frac{1}{2},1) $ can be easily converted 
to a general error bound in interval $ (\frac{1}{3},\frac{1}{2}) $.
Moreover, as a special case, 
the class of languages recognized by rtQFAs in nondeterministic mode is denoted by NQAL \cite{YS10A}.

% SSSSSSSSSSSSSSSSSSSSSSSSSSSSSSSSSSSSSSSSSSSSSSSSSSSSSSSSSSSSSSSSSSSSSSSSSSSSSSSS %
% SSSSSSSSSSSSSSSSSSSSSSSSSSSSSSSSSSSSSSSSSSSSSSSSSSSSSSSSSSSSSSSSSSSSSSSSSSSSSSSS %
% SSSSSSSSSSSSSSSSSSSSSSSSSSSSSSSSSSSSSSSSSSSSSSSSSSSSSSSSSSSSSSSSSSSSSSSSSSSSSSSS %
\section{Main results}
% SSSSSSSSSSSSSSSSSSSSSSSSSSSSSSSSSSSSSSSSSSSSSSSSSSSSSSSSSSSSSSSSSSSSSSSSSSSSSSSS %
% SSSSSSSSSSSSSSSSSSSSSSSSSSSSSSSSSSSSSSSSSSSSSSSSSSSSSSSSSSSSSSSSSSSSSSSSSSSSSSSS %
% SSSSSSSSSSSSSSSSSSSSSSSSSSSSSSSSSSSSSSSSSSSSSSSSSSSSSSSSSSSSSSSSSSSSSSSSSSSSSSSS %

In our algorithms, we use a special kind of quantum transformation, 
\textbf{\textbf{$ N $-way QFT}} (quantum Fourier transform) \cite{KW97,YS09B,YFSA11A}.
 Let $ N > 1 $ be a integer. The $ N $-way QFT is the transformation
\begin{equation*}
	\delta(d_{j}) = \frac{1}{\sqrt{N}} 
		\sum\limits_{l = 1}^{N} e^{\frac{2 \pi i }{N}jl} (r_{l}), ~~~~ 1 \le j \le N ,
\end{equation*}
from the \textit{domain} elements  $ d_{1}, \ldots, d_{N} $ to the
\textit{range} (\textit{target}) elements $ r_{1}, \ldots, r_{N} $,
where $ r_{N} $ is the distinguished target elements.
The QFT  can be used to check whether separate computational paths
of a quantum program that are in superposition have converged to the
same configuration at a particular step. Assume that the program has
previously split to $N$ paths, say $ \mathsf{s_{j}} $ ($ 1 \leq j \leq N $), each of which have the same amplitude. 
We assume that $ s_{j} $ (when having $ d_{j} $) enters to $ s_{j,1},\ldots,s_{j,N} $ by the QFT.
If $ s_{j,l} \neq s_{j',l} $ for each $ j \neq j' $, then none of the target elements is interfered 
with each other and so the distinguished target exists with probability $ \frac{1}{N} $, where $ 1 \leq l \leq N $.
Otherwise, if each $ s_{j} $ makes the QFT in different computational steps,
then we obtain the same result. But, if all of them make the QFT simultaneously,
then all targets are interfered with each other and only the distinguished target survives
with probability 1.

% ssssssssssssssssssssssssssssssssssssssssssssssssssssssssssssssssssssssssssssssss %
% ssssssssssssssssssssssssssssssssssssssssssssssssssssssssssssssssssssssssssssssss %
\subsection{Nondeterminism} \label{sec:nondeterminism}
% ssssssssssssssssssssssssssssssssssssssssssssssssssssssssssssssssssssssssssssssss %
% ssssssssssssssssssssssssssssssssssssssssssssssssssssssssssssssssssssssssssssssss %

It was shown in \cite{YS10A} that $ L \subseteq \Sigma^{*} \in $ NQAL
if and only if $ L $ is defined by a GFA, say $ \mathcal{G} $, as follows: 
(i) $ f_{\mathcal{G}}(w) > 0  $ for $ w \in L $ and 
(ii) $ f_{\mathcal{G}}(w) = 0  $ for $ w \notin L $.
We show our results in this section based on this equivalence.

We already know that the class of languages recognized by 1NFAs (reps., 1NPDAs) 
is a proper subset of the class of languages recognized by rtQFAs (resp., 1QPDA) 
in nondeterministic mode \cite{YS10A,NHK06}.
We give a stronger version of these results by using noncontextfree language
$ L_{ijk} = \{ a^{i}b^{j}c^{k} \mid i \neq j , i \neq k, j \neq k, 0 \leq i,j,k \} $.

\begin{theorem}
	\label{thm:L-ijk}
	$ L_{ijk} $ is in NQAL.
\end{theorem}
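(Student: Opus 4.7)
The plan is to use the characterization stated just above the theorem (from [YS10A]): it suffices to exhibit a GFA $ \mathcal{G} $ with $ f_{\mathcal{G}}(w)>0 $ iff $ w\in L_{ijk} $ and $ f_{\mathcal{G}}(w)=0 $ otherwise. My target is to make $ f_{\mathcal{G}}(w) $ equal to a non-negative integer polynomial in $ i=|w|_{a} $, $ j=|w|_{b} $, $ k=|w|_{c} $ that vanishes exactly outside $ L_{ijk} $, assembled from small building blocks via the product closure of GFAs.

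First, a $ 2 $-dimensional GFA $ \mathcal{G}_{ab} $ with initial vector $ v_{0}=(0,1)^{T} $, final vector $ f=(1,0) $, and transition matrices
\begin{equation*}
  A_{a} = \begin{pmatrix} 1 & 1 \\ 0 & 1 \end{pmatrix},\qquad
  A_{b} = \begin{pmatrix} 1 & -1 \\ 0 & 1 \end{pmatrix},\qquad
  A_{c} = \begin{pmatrix} 1 & 0 \\ 0 & 1 \end{pmatrix}
\end{equation*}
computes $ f_{\mathcal{G}_{ab}}(w)=|w|_{a}-|w|_{b} $ on every input. Permuting the roles of the three symbols yields analogous GFAs $ \mathcal{G}_{ac} $ and $ \mathcal{G}_{bc} $ computing $ |w|_{a}-|w|_{c} $ and $ |w|_{b}-|w|_{c} $ respectively.

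Next I use the standard fact that GFAs are closed under pointwise product: Kronecker-tensoring initial vectors, final vectors, and the transition matrices symbol by symbol yields a GFA whose value on every $ w $ is the product of the original two, by repeated use of $ (A\otimes B)(C\otimes D)=AC\otimes BD $ and the fact that tensor product of scalars is multiplication. Applying this,
\begin{equation*}
  \mathcal{G}' \;=\; \mathcal{G}_{ab}\otimes\mathcal{G}_{ab}\otimes\mathcal{G}_{ac}\otimes\mathcal{G}_{ac}\otimes\mathcal{G}_{bc}\otimes\mathcal{G}_{bc}
\end{equation*}
has value $ (i-j)^{2}(i-k)^{2}(j-k)^{2}\geq 0 $, which is strictly positive iff all three pairs of counts differ. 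Finally, a DFA for the regular language $ a^{*}b^{*}c^{*} $, viewed as a $ \{0,1\} $-valued GFA $ \mathcal{G}_{0} $ (indicator vectors for initial and accepting states, $ 0/1 $ transition matrices), contributes a factor $ 1 $ on words of the form $ a^{i}b^{j}c^{k} $ and $ 0 $ on every other input. Setting $ \mathcal{G}=\mathcal{G}_{0}\otimes\mathcal{G}' $ gives $ f_{\mathcal{G}}(w)>0 $ iff $ w\in L_{ijk} $, so $ L_{ijk}\in \mbox{NQAL} $.

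I do not foresee a genuine obstacle. The only mildly creative step is the algebraic observation that the three inequalities $ i\neq j $, $ i\neq k $, $ j\neq k $ can be witnessed simultaneously by a single non-negative integer polynomial, namely the product of their squared differences; once this is fixed, the remainder is bookkeeping with the tensor-product closure of GFAs and a regular filter for $ a^{*}b^{*}c^{*} $.
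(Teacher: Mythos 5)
Your proposal is correct and takes essentially the same route as the paper's own proof: the identical two-state difference GFAs $\mathcal{G}_{ab},\mathcal{G}_{ac},\mathcal{G}_{bc}$, squaring via the tensor-product (pointwise product) closure, and a $0/1$-valued regular filter. The only divergence is that you filter by $a^{*}b^{*}c^{*}$ where the paper uses $a^{+}b^{+}c^{+}$; since $L_{ijk}$ as defined permits $i,j,k=0$, your choice is actually the one consistent with the stated language.
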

\begin{proof}
	(See Appendix \ref{app:thm:L-ijk} for a complete proof)
	We can design a GFA to calculate the value of ($ |w|_{a}-|w|_{b} $) on a specified internal state.
	By tensoring this machine with itself, we can obtain the value of $ (|w|_{a}-|w|_{b})^{2} $.
	Similarly, we can also calculate the value of 
	\begin{equation*}
		(|w|_{a}-|w|_{b})^{2} (|w|_{a}-|w|_{c})^{2} (|w|_{b}-|w|_{c})^{2}.
	\end{equation*}
	Additionally, this value is multiplied by 0 if 
	the input is not of the form $ a^{+}b^{+}c^{+} $. 
	Therefore, the last result is a positive integer if $ w \in L_{ijk} $ and it is zero if $ w \notin L_{ijk} $.
\end{proof}

\begin{corollary}
	In nondeterministic mode, the class of the languages recognized by classical machines
	is a proper subset of the class of the languages recognized by quantum machines
	for any model between finite automaton and one-head pushdown automaton.
\end{corollary}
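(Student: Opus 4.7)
The plan is to combine Theorem \ref{thm:L-ijk} with the non-context-freeness of $L_{ijk}$ and a simple inclusion argument for the quantum simulation of classical nondeterminism, and then to observe that the separating witness $L_{ijk}$ applies uniformly across the whole range of models.

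First I would verify that $L_{ijk}$ is not context-free. This is a routine pumping-lemma (or Ogden's lemma) argument on a witness word such as $a^{p}b^{p+p!}c^{p+2\cdot p!}$: the substring $vxy$ is confined to at most two adjacent blocks, so pumping can only shift the $a$, $b$, $c$ counts by fixed increments, and a suitable choice of repetition is always forced to make one pair of indices coincide. Since every classical nondeterministic model lying between 1NFA and 1NPDA recognizes a subclass of the context-free languages, no such classical device can accept $L_{ijk}$.

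Next I would dispatch the quantum side. The weakest model in the range is the rtQFA, and $L_{ijk}$ is already in NQAL by Theorem \ref{thm:L-ijk}. Any stronger quantum variant (with extra one-way head mobility, blind counters, or a stack) can simulate this rtQFA by leaving its additional resources idle, so $L_{ijk}$ lies in the nondeterministic language class of every quantum model between the finite automaton and the one-head pushdown automaton. The classical inclusion is obtained by the standard embedding, assigning amplitude $1$ to each classical transition and tagging distinct nondeterministic branches with distinct register symbols $\omega\in\Omega$; an accepting classical computation then contributes a positive summand to $f_{\mathcal{M}}(w)$, so the classical nondeterministic class embeds into the corresponding quantum one.

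The main point requiring care is the well-formedness of this classical-to-quantum embedding: one must enlarge $\Omega$ sufficiently so that every configuration's outgoing branches carry pairwise distinct register labels, guaranteeing that the matrix $\mathsf{E}$ of Section 2 satisfies $\sum_{\omega} E_{\omega}^{\dagger} E_{\omega} = I$. Once this is in place, $L_{ijk}$ serves as a separating witness uniformly for every model in the range, establishing that the classical nondeterministic class is a proper subset of the quantum nondeterministic class and concluding the proof.
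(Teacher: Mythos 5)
Your proof is correct and takes essentially the same route as the paper, which states the corollary as an immediate consequence of Theorem~\ref{thm:L-ijk} together with the facts that $L_{ijk}$ is not context-free and that the classical nondeterministic classes embed into their quantum counterparts. One small caveat: the plain pumping lemma does not suffice for the non-context-freeness of $L_{ijk}$ (the pumped factor could lie entirely inside the $c$-block of your witness $a^{p}b^{p+p!}c^{p+2\cdot p!}$, changing only $k$ without forcing a collision), so you should commit to Ogden's lemma with the $a$'s marked.
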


Now we give a separation result between deterministic and nondeterministic automata with blind counters.
Note that, every one-way versions of these models can be easily converted to ones operating in realtime.

\begin{theorem}
	\label{thm:rtDkBCA}
	If  $ L $ is recognized by a rtD$ k $BCA, then $ \overline{L} \in $NQAL, where $ k>0 $.
\end{theorem}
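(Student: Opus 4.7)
The plan is to invoke the NQAL--GFA equivalence recalled at the start of this subsection (from [YS10A]): to show $\overline{L}\in$ NQAL it suffices to construct a GFA $\mathcal{G}$ over $\tilde{\Sigma}$ whose acceptance value $f_{\mathcal{G}}(w)$ is a non-negative integer that is zero exactly when the given rtD$k$BCA $\mathcal{D}$ accepts $w$ and positive otherwise. Writing $s(w)\in\{0,1\}$ for the indicator that $\mathcal{D}$ ends in a non-accepting state and $c_{1}(w),\ldots,c_{k}(w)\in\mathbb{Z}$ for its final counter values, the quantity
\begin{equation*}
	N(w)\;=\;s(w)+\sum_{i=1}^{k}c_{i}(w)^{2}
\end{equation*}
is non-negative and vanishes iff $s(w)=0$ and every $c_{i}(w)=0$, i.e.\ iff $w\in L$. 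So the task reduces to realizing $N(w)$ as $f_{\mathcal{G}}(w)$.

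The main building block is a GFA $\mathcal{G}_{i}$ that simulates the finite control of $\mathcal{D}$ while \emph{linearly} accumulating the final value of the $i$-th counter. Its state set is $Q\cup\{\mathrm{acc}\}$; on an input letter $\sigma\in\tilde{\Sigma}$ the column indexed by $q\in Q$ sends weight $1$ to the column indexed by $\mathcal{D}$'s next state out of $(q,\sigma)$ and weight equal to $\mathcal{D}$'s counter-$i$ update on $(q,\sigma)$ to $\mathrm{acc}$, while $\mathrm{acc}$ self-loops with weight $1$; the initial vector selects $q_{0}$ and the final vector selects $\mathrm{acc}$. A one-line induction on $|\tilde{w}|$ then gives $f_{\mathcal{G}_{i}}(w)=c_{i}(w)$. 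This construction works precisely \emph{because} the counters are blind: their increments depend only on $(q,\sigma)$, never on the current counter value, so $c_{i}(w)$ is a linear functional of the DFA trajectory and is exactly the sort of quantity a GFA can compute. Analogously, a GFA $\mathcal{G}_{0}$ that only simulates the finite control of $\mathcal{D}$ and whose final vector picks out $Q\setminus Q_{a}$ yields $f_{\mathcal{G}_{0}}(w)=s(w)$.

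To obtain the squares I would tensor $\mathcal{G}_{i}$ with itself, giving $f_{\mathcal{G}_{i}\otimes\mathcal{G}_{i}}(w)=c_{i}(w)^{2}$ via the standard identity $(A_{\sigma}\otimes A_{\sigma})(u\otimes v)=(A_{\sigma}u)\otimes(A_{\sigma}v)$; then I would take the direct sum of $\mathcal{G}_{0}$ and $\mathcal{G}_{1}\otimes\mathcal{G}_{1},\ldots,\mathcal{G}_{k}\otimes\mathcal{G}_{k}$ (block-diagonal transition matrices, concatenated initial and final vectors) to obtain a single GFA whose acceptance value equals $N(w)$, completing the proof through the NQAL--GFA characterization.

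I do not anticipate a serious obstacle. The only items requiring a careful check are (i) that the end-marker matrices fit the scheme (the $\mathrm{acc}$ coordinate simply self-loops while the $Q$-block reflects $\mathcal{D}$'s action on $\cent$ and $\dollar$), and (ii) that tensor products and direct sums of GFAs multiply and add acceptance values respectively; both are routine linear-algebra verifications. The substantive point is just that blindness promotes each counter to a linear observable of a deterministic finite-state run, which is exactly the expressive niche occupied by GFAs.
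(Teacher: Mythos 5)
Your proof is correct, but it linearizes the blind counters differently from the paper. The paper's proof keeps one coordinate per internal state and encodes all $k$ final counter values \emph{multiplicatively} into the single nonzero entry of the state vector: an update of counter $l$ by $\pm 1$ multiplies that entry by $p_{l}^{\pm 1}$ for the $l$-th prime $p_{l}$, so at the end the entry equals $\prod_{l} p_{l}^{C[l]}$, which is $1$ exactly when every counter vanishes; an auxiliary $(n+1)$-st coordinate then subtracts $1$, the final vector restricts to accepting states, and a single tensor squaring yields a GFA that is $0$ on $L$ and positive on $\overline{L}$. You instead keep the state simulation exact (zero--one) and attach one \emph{additive} accumulator coordinate per counter, obtaining each $c_{i}(w)$ as a separate GFA value, then square each by tensoring and combine everything, together with the non-acceptance indicator $s(w)$, via a direct sum to realize $s(w)+\sum_{i} c_{i}(w)^{2}$. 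Both routes are sound and rest on the same two pillars: the observation you make explicit --- blindness means the counter increments depend only on $(q,\sigma)$, so each final counter value is a linear observable of a deterministic run --- and the closure properties of GFA acceptance values (tensor products multiply, direct sums add, and the end-marker matrices fold into the initial and final vectors exactly as you indicate in your check (i)). Your version avoids the number-theoretic trick and works directly with counter updates from $\{-m,\ldots,m\}$ without normalizing to $\{-1,0,1\}$, at the cost of $k$ tensor squarings and a $(k+1)$-fold direct sum; the paper's version packs all counters into one $(n+1)$-state machine before a single squaring. Neither difference affects correctness.
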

\begin{proof}
	(See Appendix \ref{app:thm:rtDkBCA} for a complete proof)
	Let $ \mathcal{D} $ be the rtD$ k $BCA recognizing $ L $.
	We can design a GFA, say $ \mathcal{G} $, to exactly mimic the state transitions of $ \mathcal{D} $.
	($ \mathcal{G} $ can also cover the transitions of $ \mathcal{D} $ on $ \cent $ and $ \dollar $,
	by its initial and final vector component.)
	During the simulation, $ \mathcal{G} $ additionally change the values of the internal states
	with respect to the following strategy: (let $ p_{i} $ be the $ i^{th} $ prime ($ 1 \le i \le k $))
	(i) at the beginning, the value of the internal state of $ \mathcal{G} $ corresponding to the initial 
	state of $ \mathcal{D} $ is 1, and,
	(ii) if the value of the $ i^{th} $ counter is updated by $ 1 $ (resp., $ -1 $), then 
	the value of the state is multiplied by $ p_{i} $ (resp., $ \frac{1}{p_{i}} $).

	Suppose that, the computation of $ \mathcal{D} $ ends in state $ q $ on input $ w $.
	Let $ q' $ be the internal state of $ \mathcal{G} $ corresponding to $ q $
	and $ c_{q'} $ be the value of $ q' $.
	It can be easily be verified that $ c_{q'} = 1 $
	if and only if all counters of $ \mathcal{D} $ are set to zeros at the end of the computation.
	The value of $ (c_{q'}-1)^{2} $ can also be calculated by a GFA
	$ \mathcal{G'} $ if $ q $ is an accepting state.
	So, for $ w \in L $ ($ w \notin L $), we have $ f_{\mathcal{G'}}(w) = 0 $
	($ f_{\mathcal{G'}}(w) > 0 $).
\end{proof}

In \cite{FYS10A}, it is shown that $ L_{say} $ cannot be recognized by a rtQFA with 
unbounded error (and so $ L_{say} \notin $ NQAL and $ \overline{L_{say}} \notin $ NQAL \cite{YS11A}),
where
\begin{equation*}
	L_{say} = \{ w \mid \exists u_{1},u_{2}, v_{1}, v_{2} \in \{a,b\}^{*},
		w=u_{1}bu_{2}=v_{1}bv_{2}, |u_{1}| = |v_{2}| \}.
\end{equation*}
However, it can be easily be shown that $ L_{say} $ can be recognized by a rtN$ 1 $BCA:
two $ b $'s (those can also be the same) can be selected nondeterministically and by using a blind counter, 
the lengths of the substrings before the first $ b $ and after the second $ b $ can be compared.

\begin{corollary}
	For any $ k \in \mathbb{Z}^{+} $, the class of languages recognized by 1D$ k $BCAs
	is a proper subset of the class of languages recognized by 1N$ k $BCAs.
\end{corollary}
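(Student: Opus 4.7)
The plan is to exhibit a single witness language that is recognized by a 1N$k$BCA for every $k \geq 1$ but by no 1D$k$BCA. The natural candidate, already hinted at in the paragraph preceding the corollary, is $L_{say}$. The direction 1D$k$BCA $\subseteq$ 1N$k$BCA is immediate from the definitions, so the entire effort goes into the strictness.

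For the upper bound I would reuse the informal construction given just above the corollary: a rtN$1$BCA can nondeterministically guess two (not necessarily distinct) positions carrying the letter $b$, and then use its single blind counter to increment on every symbol before the first guessed $b$ and decrement on every symbol after the second guessed $b$, accepting iff the counter returns to zero. Since rtN$1$BCA $\subseteq$ 1N$k$BCA for every $k \geq 1$, this places $L_{say}$ in the right-hand class for all $k$.

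For the lower bound I would argue by contradiction. Suppose $L_{say}$ is recognized by some 1D$k$BCA $\mathcal{D}$. By the remark made just before Theorem \ref{thm:rtDkBCA} that any one-way blind counter machine can be converted to a realtime one, I may assume $\mathcal{D}$ is in fact a rtD$k$BCA. Applying Theorem \ref{thm:rtDkBCA} then gives $\overline{L_{say}} \in$ NQAL. But this directly contradicts the cited result of \cite{FYS10A,YS11A}, according to which neither $L_{say}$ nor $\overline{L_{say}}$ lies in NQAL. Hence no such $\mathcal{D}$ exists, and the separation $L_{say} \in (\text{1N}k\text{BCA}) \setminus (\text{1D}k\text{BCA})$ follows.

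The only genuinely delicate point is the one-way to realtime reduction for deterministic $k$-blind-counter machines, since the whole argument pipes through Theorem \ref{thm:rtDkBCA}, which is stated for the realtime version. This reduction is standard (a one-way head that pauses on a symbol for several steps can be simulated in realtime by absorbing the pause steps into the finite control, after normalizing stationary moves), and the paper explicitly invokes it, so I would simply cite that remark rather than reprove it. Everything else is bookkeeping: the quadratic upper bound $1$ from $(c_{q'}-1)^2$ in Theorem \ref{thm:rtDkBCA} makes the cut-point argument for $\overline{L}$ automatic, and the trivial inclusion rtN$1$BCA $\subseteq$ 1N$k$BCA handles the universal quantifier over $k$.
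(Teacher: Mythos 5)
Your proposal is correct and follows essentially the same route the paper intends: the corollary is meant to fall out of the immediately preceding material, namely the rtN$1$BCA for $L_{say}$, the remark that one-way blind-counter machines convert to realtime ones, Theorem~\ref{thm:rtDkBCA}, and the cited fact that $\overline{L_{say}} \notin$ NQAL. Your reconstruction assembles exactly these pieces, so there is nothing to add.
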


% ssssssssssssssssssssssssssssssssssssssssssssssssssssssssssssssssssssssssssssssss %
% ssssssssssssssssssssssssssssssssssssssssssssssssssssssssssssssssssssssssssssssss %
\subsection{Blind counter automata}
% ssssssssssssssssssssssssssssssssssssssssssssssssssssssssssssssssssssssssssssssss %
% ssssssssssssssssssssssssssssssssssssssssssssssssssssssssssssssssssssssssssssssss %

\begin{lemma}
	\label{lem:upal-epsilon}
	For any $ \epsilon \in (0,\frac{1}{2}) $,
	$ L_{upal} = \{a^{n}b^{n} \mid n \geq 0 \} $ can be recognized by a rtQ1BCA
	with negative one-sided error bound $ \epsilon $.
\end{lemma}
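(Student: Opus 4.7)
The plan is to build a rtQ1BCA $\mathcal{M}$ that simulates a straightforward deterministic strategy for $L_{upal}$ inside the quantum template of Section 2: use finite-state bookkeeping to enforce the pattern $a^{*}b^{*}$, and use the single blind counter to track $|w|_{a} - |w|_{b}$. Acceptance at the end will require both the internal state to witness "pattern respected" and the counter to read $0$, so the two checks combine cleanly at the final joint measurement.

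Concretely, I would take three internal states $Q = \{q_{0}, q_{1}, q_{r}\}$ with $Q_{a} = \{q_{0}, q_{1}\}$, where $q_{0}$ is the initial "only $a$'s seen so far" state, $q_{1}$ is the "$b$-phase after some $a$'s" state, and $q_{r}$ is an absorbing rejecting state used whenever an $a$ appears after a $b$. The deterministic transitions are: on $a$ from $q_{0}$, stay in $q_{0}$ with counter update $+1$; on $b$ from $q_{0}$, go to $q_{1}$ with counter update $-1$; on $b$ from $q_{1}$, stay in $q_{1}$ with counter update $-1$; on $a$ from $q_{1}$, go to $q_{r}$; and on any symbol from $q_{r}$, remain in $q_{r}$. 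To lift this deterministic evolution to the quantum template, I would give each source pair $(q, \sigma) \in Q \times \tilde{\Sigma}$ its own dedicated register symbol $\omega_{q,\sigma}$. Then each matrix $E_{\omega_{q,\sigma}}$ has a single unit entry in every column of the form $(q, v)$ and is zero elsewhere, and the non-zero columns of different $E_{\omega}$'s sit in disjoint configuration subspaces; the well-formedness condition $\sum_{\omega} E_{\omega}^{\dagger} E_{\omega} = I$ is therefore automatic.

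For correctness, on any $w = a^{n}b^{n}$ the evolution is a single unit-amplitude computation path that ends in $q_{0}$ (if $n=0$) or $q_{1}$ (if $n>0$) with counter value $0$, so the joint measurement on $Q_{a}$ and counter-zero accepts with probability exactly $1$, as required by clause (i) of negative one-sided recognition. On any $w \notin L_{upal}$, either the pattern $a^{*}b^{*}$ is violated, in which case the unique path lands in $q_{r} \notin Q_{a}$, or $|w|_{a} \neq |w|_{b}$, in which case the counter is non-zero; in both cases the acceptance probability is $0$, which is certainly $\leq \epsilon$ for every $\epsilon \in (0, \tfrac{1}{2})$. I do not expect a substantive obstacle: the only thing demanding care is the unitarity bookkeeping inside the quantum template, and the fresh-register-symbol trick disposes of it in one step. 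The construction in fact yields a single machine that works simultaneously for every $\epsilon$ (with exact acceptance on $L_{upal}$ and zero acceptance off it), which is strictly stronger than the stated bound.
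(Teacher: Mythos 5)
Your construction is correct: $L_{upal}$ is already a \emph{deterministic} realtime one-blind-counter language (it is essentially Greibach's canonical example), and your fresh-register-symbol trick does embed that deterministic machine into the paper's quantum template with $\sum_{\omega}E_{\omega}^{\dagger}E_{\omega}=I$ satisfied, giving exact acceptance on members and zero acceptance on non-members. This is, however, a genuinely different route from the paper's. The paper's proof of Lemma \ref{lem:upal-epsilon} deliberately does not rely on the final counter-zero test to do the arithmetic; instead it splits the computation into $N$ paths, with $\mathsf{path}_{j}$ incrementing the counter by $j$ per $a$ and decrementing by $j$ per $b$, and then applies an $N$-way QFT on $\dollar$, so that the paths interfere constructively in the distinguished accepting target exactly when they have all converged to the same configuration, i.e. when $m=n$. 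What that buys is reusability: in Theorem \ref{thm:upal-star-epsilon} the same interference gadget is applied once per block of $L_{upal}^{*}$, a language Greibach showed is \emph{not} recognizable by any 1D$k$BCA, and there your deterministic strategy provably breaks down --- a single blind counter, inspected only at the end, certifies $\sum_{i}m_{i}=\sum_{i}n_{i}$ rather than blockwise equality, and it can be neither reset nor queried mid-computation. So your argument does prove the lemma (indeed with error $0$, stronger than claimed), but it also makes clear that the lemma in isolation carries no quantum-versus-classical separation; the value of the paper's more elaborate proof is the technique it sets up for the $L_{upal}^{*}$ theorem. One cosmetic omission: you should also specify the (identity-like) transitions on $\cent$ and $\dollar$, but that is routine.
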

\begin{proof}	
	Let $ N \geq 2 $ and $ \mathcal{M}_{upal,N} = (Q,\Sigma,\Omega,\delta,q_{0},Q_{a}) $ 
	be a rtQ1BCA, where
	$ Q = \{ q_{0},a_{0},r_{0} \} \cup \{ q_{j} \cup q'_{j} \cup p_{j} \cup r_{j} \mid 1 \leq j \leq N \} $,
	$ \Omega=\{ \omega_{1},\omega_{2},\omega_{r}  \} $,
	$ Q_{a} = \{a_{0},p_{N}\} $.
	The details of $ \delta $ is given in Figure \ref{fig:M-upal-N}. 
	(The missing part of $ \delta $ can be easily be completed.)

	We show that $ \mathcal{M}_{upal,N} $ recognizes $ L_{upal} $
	with negative one-sided error bound $ \frac{1}{N} $.
	Therefore, by setting $ N = \left\lceil \frac{1}{\epsilon} \right\rceil $,
	we obtain the desired machine.
	
	\begin{figure}[h!]	
	\centering
	\fbox{
	\scriptsize
	\begin{minipage}{1\textwidth}
		\begin{equation*}
			\begin{array}{|c|c|c|c|}
				\multicolumn{4}{c}{\mathsf{mainpath}}
				\\
				\hline
				\cent & a & b & \dollar
				\\ \hline 				
				\delta(q_{0},\cent)  = (q_{0},0,\omega_{1})
				&
				\begin{array}{lcl}
					\delta(q_{0},a) & = & \frac{1}{\sqrt{N}} \sum\limits_{j=1}^{N} (q_{j},j,\omega_{1}) \\
					\delta(r_{0},a) & = & (r_{0},0,\omega_{r})
				\end{array}
				& 
				\begin{array}{lcl}
					\delta(q_{0},b) & = & (r_{0},0,\omega_{1}) \\
					\delta(r_{0},b) & = & (r_{0},0,\omega_{r})
				\end{array}
				&
				\begin{array}{lcl}
					\delta(q_{0},\dollar) & = & (a_{0},0,\omega_{1})  \\
					\delta(r_{0},\dollar) & = & (r_{0},0,\omega_{r})
				\end{array}
				\\ \hline
			\end{array}
		\end{equation*}
		\begin{equation*}
			\begin{array}{|c|c|c|}
				\multicolumn{3}{c}{\mathsf{path_{j}} ~ (1 \leq j \leq N)} 
				\\ \hline
				a~\mbox{(\tiny before reading a $ b $)} & b & \dollar
				\\ \hline
					\begin{array}{lcl}
						\delta(q_{j},a) & = & (q_{j},j,\omega_{2}) \\						
					\end{array}
					&
					\begin{array}{lcl}
						\delta(q_{j},b) & = & (q'_{j},-j,\omega_{1}) \\
						\delta(q'_{j},b) & = & (q'_{j},-j,\omega_{2})
					\end{array}
					&
					\begin{array}{lcl}
						\delta(q_{j},\dollar) & = & (q_{j},0,\omega_{1}) \\
						\delta(q'_{j}\,\dollar) & = & 
							\frac{1}{\sqrt{N}} \sum\limits_{l = 1}^{N} 
							e^{\frac{2 \pi i }{N}jl} (p_{l},0,\omega_{1})		
					\end{array}
				\\ \hline
				\multicolumn{3}{|c|}{a~\mbox{(\tiny the first $ a $ after reading a $ b $)}}
				\\ \hline
				\multicolumn{3}{|c|}{ \delta(q'_{j},a)  =  (r_{j},j,\omega_{1}) }
				\\ \hline
			\end{array}
		\end{equation*}
		\begin{equation*}
			\begin{array}{|c|c|c|}
				\multicolumn{3}{c}{\mathsf{rejecting\mbox{-}path_{j}} ~ (1 \leq j \leq N)}
				\\
				\hline
				a & b & \dollar
				\\ \hline 				
				\delta(r_{j},a)  = (r_{j},0,\omega_{r})
				&
				\delta(r_{j},b)  = (r_{j},0,\omega_{r})
				& 
				\delta(r_{j},\dollar)  = (r_{j},0,\omega_{r})
				\\ \hline
			\end{array}
		\end{equation*}
	\end{minipage}} 
	\caption{The details of the transition function of $ \mathcal{M}_{upal,N} $}
	\label{fig:M-upal-N}
	\end{figure}		
		
	We begin with two trivial cases: 
	(i) if the input is empty string, then it is accepted with probability 1; 
	(ii) if the input begins with a $ b $, then it is rejected with probability 1.
	So, we assume the input to begin with an $ a $ in the remaining part.
	After reading the first $ a $, the computation is split into $ N $ different paths,
	$ \mathsf{path}_{j} $ ($ 1 \le j \leq N $), with amplitude $ \frac{1}{\sqrt{N}} $
	and the counter value is increased by $ j $ in $ \mathsf{path}_{j} $.
	Each path keeps the same increment strategy as long as reading $ a $'s.
	After reading a $ b $, each path switches to a decrement strategy such that 
	the counter value is decreased  by $ j $ in $ \mathsf{path}_{j} $
	as long as reading $ b $'s.
	
	If an $ a $ is read after a $ b $, $ \mathsf{path_{j}} $ passes to $ \mathsf{rejecting\mbox{-}path_{j}} $,
	in which the input is rejected with probability 1 at the end.
	Otherwise, the input is of the form $ a^{m}b^{n} $, where $ m > 0 $ and $ n \geq 0 $, and
	before reading $ \dollar $, the machine is in the following superposition (of the configurations):
	\begin{equation*}
		\sum_{j=1}^{N} \frac{1}{\sqrt{N}} \ket{ (q'_{j},j(m-n)) }.
	\end{equation*}
	Note that, if $ m=n $, then we have
	\begin{equation*}
		\sum_{j=1}^{N} \frac{1}{\sqrt{N}} \ket{ (q'_{j},0) }.
	\end{equation*}
	Thus, after reading $ \dollar $, each path enters an $ N $-way QFT.
	That is, (i) if $ m=n $, all configurations are interfered with each other
	and only $ \ket{(p_{N},0)} $ remains with probability 1 and so the input is accepted exactly;
	(ii) if $ m \neq n $, none of the configuration is interfered and so 
	the input is accepted with probability $ \frac{1}{N} $ --
	before the measurement,
	the configurations with $ p_{N} $ exist in the superposition as
	\begin{equation*}
		\sum_{j=1}^{N} \frac{1}{N} \ket{ (p_{N},j(m-n)) }.
	\end{equation*}
	Note that, in case of $ m \neq n $, the configurations with an internal state different than 
	$ q_{N} $ are observed with probability $ 1-\frac{1}{N} $ at the end.
\end{proof}

\begin{theorem}
	\label{thm:upal-star-epsilon}
	For any $ \epsilon \in (0,\frac{1}{2}) $,
	$ L_{upal}^{*} $ can be recognized by a rtQ1BCA with negative one-sided error bound $ \epsilon $.
\end{theorem}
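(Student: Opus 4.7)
The plan is to generalise $\mathcal{M}_{upal,N}$ from Lemma~\ref{lem:upal-epsilon} so that the $N$-way QFT trick is reused at every boundary between two adjacent blocks of the input parsed as $a^{m_1}b^{n_1}\cdots a^{m_k}b^{n_k}$. First, I would equip the machine with deterministic transitions that, using a dedicated register symbol $\omega_r$ to stay orthogonal to the quantum branches, send into a dead rejecting state any input that fails to match the regular pattern $(a^+b^+)^*$; this disposes of all structural violations, so we may assume the block decomposition exists.

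For the quantum part I would keep the initial splitting and the intra-block transitions of Lemma~\ref{lem:upal-epsilon} (the $N$-way split from $q_0$ on the first $a$, the in-block $+j$ and $-j$ counter updates, and the $\$$-QFT from $q'_j$), and insert a new block-boundary transition: whenever an $a$ is read while the internal state is some $q'_j$, apply
\[
\delta(q'_j,a)=\frac{1}{\sqrt{N}}\sum_{l=1}^N e^{2\pi i jl/N}\bigl(q_l,l,\omega_b\bigr),
\]
with a fresh register symbol $\omega_b$, which simultaneously realises an $N$-way QFT on the $b$-mode family and re-seeds the counter for the next block. The accepting set would be $Q_a=\{q_0,A_1,\ldots,A_N\}$, where the $A_l$ are the targets of the $\$$-QFT, complemented by the rtQ1BCA's built-in zero-counter measurement.

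Correctness would then follow by induction on the block count. If the input belongs to $L_{upal}^*$, every block is balanced, so at every block-boundary QFT the superposed paths carry identical counter values, and the QFT either constructively collapses the state onto a single path $\ket{q_N,N}$ or regenerates the uniform superposition $\frac{1}{\sqrt N}\sum_l\ket{q_l,l}$, depending on the parity of the current block index; in either case the final $\$$-QFT sends the entire amplitude to configurations inside $Q_a$ with zero counter, so the input is accepted with probability $1$. If the input matches the regex but has some $\Delta_i=m_i-n_i\neq 0$, then at the next QFT the $N$ superposed paths acquire pairwise distinct counter values (differing by nonzero multiples of $\Delta_i$), no interference is possible, and the amplitude of every subsequent configuration is at most $1/N$ in modulus. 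Expanding the amplitude at any accepting configuration $(A_h,0)$ as a Gauss-type sum over the remaining scaling indices would show that the total acceptance probability is at most $1/N$, so setting $N=\lceil 1/\epsilon\rceil$ yields the claimed one-sided error bound.

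The main obstacle would be this last amplitude bound when several blocks are unbalanced, because the phases $e^{2\pi i jl/N}$ produced at successive block-boundary QFTs could, a priori, conspire along the affine set $\{(j_1,\ldots,j_k):\sum_i j_i\Delta_i=0\}$ of zero-counter accepting configurations and raise the acceptance probability above $1/N$. The saving observation is that once a block with $\Delta_i\neq 0$ has been processed, this constraint fixes one scaling index in terms of the others; summing the squared amplitudes over the output label $h$ of the final QFT and applying the character identity $\sum_h e^{2\pi i h(l-l')/N}=N\delta_{l,l'}$ cancels the cross-terms, leaving a diagonal sum of at most $N$ contributions of size $1/N^2$, which is exactly the $1/N$ bound required.
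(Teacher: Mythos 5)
Your high-level plan (repeat the $N$-way QFT at each block boundary) is the right one, but the specific boundary transition $\delta(q'_j,a)=\frac{1}{\sqrt{N}}\sum_{l=1}^N e^{2\pi i jl/N}(q_l,l,\omega_b)$ does not work, and the failure is structural rather than something your Gauss-sum analysis could repair. As you yourself note, after a balanced block the QFT collapses the state onto the \emph{single} configuration $\ket{(q_N,\,C+N)}$. The next block is then read by one path only, so at the following boundary there is nothing to interfere: the QFT maps one configuration to $N$ distinct ones and loses no amplitude. Consequently the imbalance of every such ``single-path'' block is never tested by interference; it survives only as a common additive offset $N\Delta_i$ in the blind counter, and since you put all final targets $A_1,\dots,A_N$ into $Q_a$, the only remaining rejection mechanism is the end-of-computation zero test on the counter, which sees only the \emph{sum} of these offsets. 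Concretely, take the input $ab\,aab\,ab\,abb$ with block imbalances $(0,1,0,-1)$: block 1 is balanced, so the state collapses; block 2 is read by the lone path $q_N$ and contributes offset $N$; block 3 is balanced, so the state collapses again; block 4 is read by the lone path $q_N$ and contributes offset $-N$. The machine reaches $\$$ in the single configuration $\ket{(q'_N,0)}$ and accepts with probability $1$, yet the string is not in $L_{upal}^{*}$ (its forced decomposition into $a^+b^+$ blocks contains unbalanced blocks). So the construction is wrong, not merely hard to analyze.

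The missing idea, which is exactly what the paper's construction supplies, is to make the boundary QFT's $N-1$ non-distinguished targets \emph{permanent rejecting states} $r_1,\dots,r_{N-1}$ (with counter update $0$), and to replace the distinguished target by a fresh uniform re-split $\frac{1}{\sqrt{N}}\sum_{k=1}^{N}(q_k,k)$:
\begin{equation*}
\delta(q'_j,a)=\frac{1}{\sqrt{N}}\sum_{l=1}^{N-1}e^{\frac{2\pi i}{N}jl}(r_l,0,\omega_3)+\frac{1}{\sqrt{N}}e^{2\pi i j}\left(\frac{1}{\sqrt{N}}\sum_{k=1}^{N}(q_k,k,\omega_3)\right).
\end{equation*}
This fixes both defects at once: after a balanced block the destructive interference kills the $r_l$ branches and the full amplitude lands on the re-split bundle, so \emph{every} block is read by a genuine $N$-path superposition and is actually tested at the next boundary; and after an unbalanced block a $1-\frac{1}{N}$ fraction of the probability is dumped immediately into dead rejecting states, so imbalances can never hide in the counter and cancel later. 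With this transition the error analysis becomes trivial (members accepted with probability $1$, nonmembers rejected with probability at least $1-\frac{1}{N}$ already at the first unbalanced block) and no bound on conspiring phases is needed.
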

\begin{proof}
	We use the idea presented in the proof of Lemma \ref{lem:upal-epsilon} after 
	making a small modification.
	Let $ N \geq 2 $ and $ \mathcal{M}_{upal^{*},N} = (Q,\Sigma,\Omega,\delta,q_{0},Q_{a}) $ 
	be a rtQ1BCA, where
	$ Q = \{ q_{0},a_{0},r_{0} \} \cup \{ q_{j} \cup q'_{j} \cup p_{j} \mid 1 \leq j \leq N \} 
	\cup \{ r_{j} \mid 1 \leq j \leq N-1 \} $,
	$ \Omega=\{ \omega_{1},\omega_{2},\omega_{3},\omega_{r}  \} $,
	$ Q_{a} = \{a_{0},p_{N}\} $.
	The details of $ \delta $ is given in Figure \ref{fig:M-upal-N-star}. 
	(The missing part of $ \delta $ can be easily be completed.)
	
	\begin{figure}[h!]	
	\centering
	\fbox{
	\scriptsize
	\begin{minipage}{1\textwidth}
		\begin{equation*}
			\begin{array}{|c|c|c|c|}
				\multicolumn{4}{c}{\mathsf{mainpath}}
				\\
				\hline
				\cent & a & b & \dollar
				\\ \hline 				
				\delta(q_{0},\cent)  = (q_{0},0,\omega_{1})
				&
				\begin{array}{lcl}
					\delta(q_{0},a) & = & \frac{1}{\sqrt{N}} \sum\limits_{j=1}^{N} (q_{j},j,\omega_{1}) \\
					\delta(r_{0},a) & = & (r_{0},0,\omega_{r})
				\end{array}
				& 
				\begin{array}{lcl}
					\delta(q_{0},b) & = & (r_{0},0,\omega_{1}) \\
					\delta(r_{0},b) & = & (r_{0},0,\omega_{r})
				\end{array}
				&
				\begin{array}{lcl}
					\delta(q_{0},\dollar) & = & (a_{0},0,\omega_{1})  \\
					\delta(r_{0},\dollar) & = & (r_{0},0,\omega_{r})
				\end{array}
				\\ \hline
			\end{array}
		\end{equation*}
		\begin{equation*}
			\begin{array}{|c|c|c|}
				\multicolumn{3}{c}{\mathsf{path_{j}} ~ (1 \leq j \leq N)} 
				\\ \hline
				a~\mbox{(\tiny before reading a $ b $)} & b & \dollar
				\\ \hline
					\begin{array}{lcl}
						\delta(q_{j},a) & = & (q_{j},j,\omega_{2}) \\
					\end{array}
					&
					\begin{array}{lcl}
						\delta(q_{j},b) & = & (q'_{j},-j,\omega_{1}) \\
						\delta(q'_{j},b) & = & (q'_{j},-j,\omega_{2}) 
					\end{array}
					&
					\begin{array}{lcl}
						\delta(q_{j},\dollar) & = & (q_{j},0,\omega_{1}) \\
						\delta(q'_{j}\,\dollar) & = & 
							\frac{1}{\sqrt{N}} \sum\limits_{l = 1}^{N} 
							e^{\frac{2 \pi i }{N}jl} (p_{l},0,\omega_{1}) 		
					\end{array}
				\\ \hline
				\multicolumn{3}{|c|}{a~\mbox{(\tiny the first $ a $ after reading a $ b $)}}
				\\ \hline
				\multicolumn{3}{|c|}{
					\delta(q'_{j},a)  =  \frac{1}{\sqrt{N}} \sum\limits_{l = 1}^{N-1} 
							e^{\frac{2 \pi i }{N}jl} (r_{l},0,\omega_{3})		
						+ \frac{1}{\sqrt{N}} e^{2 \pi i j}
						\left( \frac{1}{\sqrt{N}} \sum\limits_{k = 1}^{N} (q_{k},k,\omega_{3}) \right)
							}
				\\ \hline
			\end{array}
		\end{equation*}
		\begin{equation*}
			\begin{array}{|c|c|c|}
				\multicolumn{3}{c}{\mathsf{rejecting\mbox{-}path_{j}} ~ (1 \leq j \leq N-1)}
				\\
				\hline
				a & b & \dollar
				\\ \hline 				
				\delta(r_{j},a)  = (r_{j},0,\omega_{r})
				&
				\delta(r_{j},b)  = (r_{j},0,\omega_{r})
				& 
				\delta(r_{j},\dollar)  = (r_{j},0,\omega_{r})
				\\ \hline
			\end{array}
		\end{equation*}
	\end{minipage}} 
	\caption{The details of the transition function of $ \mathcal{M}^{*}_{upal,N} $}
	\label{fig:M-upal-N-star}
	\end{figure}
	
	Suppose that the input is of the form $ (a^{+}b^{+}) (a^{+}b^{+})^{+} $ or 
	$ (a^{+}b^{+}) (a^{+}b^{+})^{*}a^{+} $.
	(If not, $ \mathcal{M}_{upal^{*},N} $ behaves exactly the same as $ \mathcal{M}_{upal,N} $.)
	After reading the first block of $ (a^{+}b^{+}) $, 
	$ \mathcal{M}_{upal^{*},N} $ enters a QFT on the first $ a $ 
	to compare the number of $ a $'s and the number of $ b $'s in the block.
	The targets of the QFT are $ \mathsf{rejecting\mbox{-}path_{j}} $'s ($ 1 \leq j \leq N-1 $) 
	and the distinguished one in which the computation re-splits into $ \mathsf{path}_{k} $'s ($ 1 \leq k \leq N $)
	with equal amplitudes.
	Thus, if the block contains the equal number of $ a $'s and $ b $'s,
	only the distinguished target remains and the computation goes on in $ \mathsf{path}_{k} $'s
	with probability 1.
	Otherwise, with probability $ 1-\frac{1}{N} $,
	the computation enters $ \mathsf{rejecting\mbox{-}path_{j}} $'s,
	in which the input is rejected certainly at the end.
	The same procedure is repeated for each $ (a^{+}b^{+}) $ block that is followed by an $ a $	
	in $ \mathsf{path}_{k} $'s.
	When reading $ \dollar $, the computation again enters a final QFT in $ \mathsf{path}_{k} $'s such that
	the distinguished target is a configuration with the accepting state $ p_{N} $.
	
	Therefore, the members of $ L_{upal}^{*} $ are accepted exactly and
	the nonmebers are rejected with probability at least $ 1 - \frac{1}{N} $.
	By setting $ N = \left\lceil \frac{1}{\epsilon} \right\rceil $,
	we obtain the desired machine.
\end{proof}

	In \cite{Gr78},
	it was shown that $ L_{upal}^{*} $ cannot not recognized by any 1D$ k $BCAs,
	where $ k \in \mathbb{Z}^{+} $.
	
\begin{corollary}
	For any $ k \in \mathbb{Z}^{+} $ and $ \epsilon \in (0,\frac{1}{2}) $,
	the class of languages recognized by 1D$ k $BCAs is a proper subset of 
	the class of the languages recognized by rtQ$ k $BCAs with error bound $ \epsilon $.
\end{corollary}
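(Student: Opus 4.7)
The plan is to assemble the corollary from three ingredients already at hand: the inclusion of deterministic classical blind-counter power inside the corresponding quantum class, the language $L_{upal}^{*}$ as a separating witness via Theorem \ref{thm:upal-star-epsilon}, and the lower bound from \cite{Gr78} saying $L_{upal}^{*}$ escapes every 1D$k$BCA. The main content of the write-up is therefore a short bookkeeping argument rather than a new construction.

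First I would establish containment. The paper already remarks that any one-way classical machine with blind counters can be turned into a realtime one, since a deterministic 1D$k$BCA can only make a bounded number of stationary steps between head moves (else it loops forever on an infinite counter tape), so these stationary steps may be absorbed into the next reading transition. The resulting rtD$k$BCA is in turn a special case of a rtQ$k$BCA: interpret every deterministic transition as a unit-amplitude quantum transition writing the fixed register symbol $\omega_{1}$, and place the single accepting state in $Q_{a}$. Such a machine has zero error, so it trivially meets the error bound $\epsilon \in (0,\tfrac{1}{2})$. Hence $ \textrm{1D}k\textrm{BCA} \subseteq \textrm{rtQ}k\textrm{BCA}_{\epsilon} $ for every $k \in \mathbb{Z}^{+}$.

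Next I would exhibit the witness. By Theorem \ref{thm:upal-star-epsilon}, for any $\epsilon \in (0,\tfrac{1}{2})$ the language $L_{upal}^{*}$ is recognized by a rtQ1BCA with negative one-sided error $\epsilon$, which implies in particular error bound $\epsilon$. For any $k \geq 1$, this rtQ1BCA can be viewed as a rtQ$k$BCA that simply never alters its extra $k-1$ counters (their update components are set to $0$ in every transition), so $L_{upal}^{*}$ is recognized by a rtQ$k$BCA with error bound $\epsilon$ for every $k \geq 1$. On the classical side, the result of \cite{Gr78} cited just above the corollary says that $L_{upal}^{*}$ is not recognized by any 1D$k$BCA, for any $k \in \mathbb{Z}^{+}$. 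Combining the two gives the strict separation.

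The only step that needs any care, and the step I would expect a referee to flag, is the ``add unused counters'' argument for promoting the rtQ1BCA of Theorem \ref{thm:upal-star-epsilon} into a rtQ$k$BCA: one must check that the well-formedness condition (orthonormality of the columns of the matrix $\mathsf{E}$ from the generic template in Subsection 2.2) is preserved when the configuration space is enlarged by the new counter coordinates. This is immediate because the transitions act as the identity on those coordinates, so the enlarged $E_{\omega}$ is a tensor product of the original one with an identity block and orthonormality is preserved. Everything else is quotation from the two already-cited results.
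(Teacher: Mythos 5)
Your proposal is correct and follows exactly the route the paper intends: the corollary is stated immediately after Theorem \ref{thm:upal-star-epsilon} and the citation of \cite{Gr78}, and the implicit argument is precisely the combination of (i) the trivial containment of 1D$k$BCAs in rtQ$k$BCAs with error bound $\epsilon$ (using the paper's earlier remark that one-way blind-counter machines convert to realtime ones) and (ii) $L_{upal}^{*}$ as the separating witness. Your extra care about padding with unused counters and preserving well-formedness is a reasonable, if routine, addition.
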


\begin{conjecture}
	$ L_{upal}^{*} $ cannot be recognized by any 1P$ k $BCA with bounded error, where $ k>0 $.
\end{conjecture}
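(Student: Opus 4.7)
The plan is to assume, for contradiction, that some 1P$k$BCA $\mathcal{M}=(Q,\Sigma,\delta,q_{0},Q_{a})$ recognizes $L_{upal}^{*}$ with bounded error $\epsilon<\frac{1}{2}$, and derive a contradiction by analyzing how the acceptance probability depends on the block-length pattern of the input. The first step is to encode $\mathcal{M}$ as a family of matrices over the Laurent-polynomial ring $\mathbb{R}[z_{1}^{\pm 1},\ldots,z_{k}^{\pm 1}]$: for each input symbol $\sigma$ and each counter-update vector $c\in\{-m,\ldots,m\}^{k}$, the entry $A_{\sigma}[p,q]$ is the total probability of transitions in $\delta(q,\sigma)$ that change the state to $p$ and update the counters by $c$, weighted by the monomial $z^{c}$. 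The acceptance probability of a word $w$ then equals the constant ($z^{0}$) coefficient of $\pi_{Q_{a}}^{T}A_{w_{|w|}}\cdots A_{w_{1}}e_{q_{0}}$; this repackaging turns the blind counters into an algebraic bookkeeping device and reduces the problem to analyzing matrix products.

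Next I would restrict attention to block inputs of the form $w(n_{1},m_{1},\ldots,n_{t},m_{t})=a^{n_{1}}b^{m_{1}}\cdots a^{n_{t}}b^{m_{t}}$. Passing to Jordan normal forms of $A_{a}$ and $A_{b}$, the acceptance probability becomes, after the constant-term extraction, a finite sum of terms of the shape $p(n_{1},m_{1},\ldots,n_{t},m_{t})\prod_{i=1}^{t}\lambda_{i}^{n_{i}}\mu_{i}^{m_{i}}$, where $p$ is a polynomial and the $\lambda_{i},\mu_{i}$ come from the finite multiplicative spectra of $A_{a}$ and $A_{b}$. The bounded-error hypothesis forces this expression to exceed $1-\epsilon$ whenever $n_{i}=m_{i}$ for every $i$ and to stay below $\epsilon$ otherwise. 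A diagonal argument, letting $t\to\infty$, then applies simultaneous Diophantine approximation (Dirichlet / Kronecker) to the eigenvalues to produce, block by block, pairs $(n,m)$ with $n\neq m$ for which each $\lambda_{i}^{n}\mu_{i}^{m}$ is arbitrarily close to $\lambda_{i}^{n}\mu_{i}^{n}$; compactness on the finitely many matrix orbits of $\mathcal{M}$ then yields a word outside $L_{upal}^{*}$ whose acceptance probability lies within $1-2\epsilon$ of that of a companion word inside $L_{upal}^{*}$, contradicting the assumption.

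The main obstacle, by far, is the Diophantine / compactness step: the constant-term extraction associated with the blind counters can make exponential-polynomial terms collapse or interfere in delicate ways, particularly when $k>1$, so the approximation argument must be carried out jointly on the matrix product and on the Laurent-polynomial bookkeeping rather than on each in isolation. A possible alternative route --- which might sidestep the analytic difficulty --- would be to prove a structural closure lemma stating that any language recognized by a 1P$k$BCA with bounded error is invariant under a suitable Parikh-style quotient, and then exhibit an explicit family of such quotients distinguishing $L_{upal}^{*}$; however, absent such a lemma, the analytic route above appears the more direct of the two.
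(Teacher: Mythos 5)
The statement you are addressing is stated in the paper as a \emph{conjecture}: the paper offers no proof of it, so there is nothing to compare your argument against except the surrounding known results. Judged on its own terms, your proposal is not a proof but a programme, and its central step is missing. You yourself flag the ``Diophantine / compactness step'' as the main obstacle, and that step is exactly where the entire content of the conjecture lives; everything before it is routine encoding. Concretely, two things break down. First, the reduction of the acceptance probability to ``a finite sum of terms $p(n_{1},\ldots,m_{t})\prod_{i}\lambda_{i}^{n_{i}}\mu_{i}^{m_{i}}$'' is not justified: the matrices $A_{a},A_{b}$ live over the Laurent-polynomial ring $\mathbb{R}[z_{1}^{\pm1},\ldots,z_{k}^{\pm1}]$, they have no Jordan form over $\mathbb{R}$ or $\mathbb{C}$, and the constant-term extraction (equivalently, integration over the torus after substituting $z_{j}=e^{i\theta_{j}}$) does not commute with a spectral decomposition of the product. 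What you actually get is an integral over $\theta$ of a product of $\theta$-dependent matrices, and turning that into a tractable exponential-polynomial expression in the block lengths is itself a substantial open-ended task, especially for $k>1$.

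Second, and more fundamentally, any correct proof must explain why the same argument does \emph{not} apply to $L_{upal}$ itself or to $L_{upal(t)}$ for fixed $t$, since the paper records (via Freivalds) that these \emph{are} recognized by rtP1BCAs with bounded error. Your framework --- matrix products indexed by block lengths, eigenvalue analysis, simultaneous Diophantine approximation to produce a non-member whose acceptance probability is close to that of a member --- applies verbatim to those languages, so if it worked as sketched it would prove something false. The contradiction must therefore come specifically from the unbounded number of blocks in $L_{upal}^{*}$ (intuitively, a probabilistic machine's error should accumulate over unboundedly many independent comparisons, whereas the quantum machine of Theorem~\ref{thm:upal-star-epsilon} resets exactly via interference), and that is precisely the ``diagonal argument, letting $t\to\infty$'' that you do not carry out. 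Until that step is made precise --- including uniform control of the polynomial factors and of how the bounded-error \emph{inequalities} (rather than equalities) interact with the approximation --- the proposal does not settle the conjecture.
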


% ssssssssssssssssssssssssssssssssssssssssssssssssssssssssssssssssssssssssssssssss %
% ssssssssssssssssssssssssssssssssssssssssssssssssssssssssssssssssssssssssssssssss %
\subsection{Multihead finite automata}
% ssssssssssssssssssssssssssssssssssssssssssssssssssssssssssssssssssssssssssssssss %
% ssssssssssssssssssssssssssssssssssssssssssssssssssssssssssssssssssssssssssssssss %

Let $ L_{upal(t)} $ and $ L'_{upal(t)} $ be the following languages:
\begin{equation*}
	L_{upal(t)} = \{ a^{n_{1}} b  \cdots b a^{n_{t}} b a^{n_{t}} b \cdots b a^{n_{1}} 
	\mid n_{i} \geq 0, 1 \le i \le t \}
\end{equation*}
and
\begin{equation*}
	L'_{upal(t)} = \{ a^{n_{1}} b  \cdots b a^{n_{t}} b a^{n_{t}} b \cdots b a^{n_{1}} 
	\mid n_{i} > 0, 1 \le i \le t \},
\end{equation*}
respectively.
It was shown in \cite{Ku91} that for any $ k $, there exists a $ t>0 $ such that
language $ L'_{upal(t)} $ cannot be recognized by any 1N$ k $FA.
We can argue the same argument also for $ L_{upal(t)} $ since
any 1N$ k $FA recognizing $ L_{upal(t)} $ can be converted to a 1N$ k $FA recognizing $ L'_{upal(t)} $
in a straightforward way.
On the other hand, we show that, for any $ \epsilon \in (0,\frac{1}{2}) $,
$ L_{upal(t)} $ can be recognized by a 1QFA or a 1P3FA with negative one-sided error bound $ \epsilon $.

\begin{lemma}
	\label{lem:upal(1)-epsilon}
	For any $ \epsilon \in (0,\frac{1}{2}) $, language $ L_{upal(1)} $ can be recognized by a 1QFA
	with negative one-sided error bound $ \epsilon $.
\end{lemma}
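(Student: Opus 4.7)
The plan is to adapt the QFT technique of Lemma~\ref{lem:upal-epsilon} to a 1QFA by replacing the blind counter with the one-way head's ability to stall on the current input cell. I will construct a machine $\mathcal{M}_{upal(1),N}$ with $N=\lceil 1/\epsilon\rceil$ computational branches. On $\cent$ the machine splits into $N$ paths $\mathsf{path}_j$ ($1\le j\le N$) with equal amplitude $\frac{1}{\sqrt{N}}$; inputs not of the form $a^{m}ba^{n}$ (together with the edge cases $\varepsilon$ and the single symbol $b$) are routed to an appropriate deterministic trap.

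In $\mathsf{path}_j$, each $a$ of the first block is consumed in exactly $j+1$ transition steps by a cycle of states $q_{j,0},\ldots,q_{j,j}$ on one input cell, the first $j$ of which are $\stopstate{\cdot}$ (stay) and only the last is $\rightstate{\cdot}$ (move right); symmetrically, each $a$ of the second block is consumed in exactly $N-j+1$ steps via an analogous cycle on states $q'_{j,0},\ldots,q'_{j,N-j}$. The transitions on $\cent$, on $b$, and the arrival on $\dollar$ cost a number of steps independent of $j$. Consequently, on input $a^{m}ba^{n}$, $\mathsf{path}_j$ first reads $\dollar$ at step
\begin{equation*}
T(j,m,n)=(j+1)m+(N-j+1)n+c=j(m-n)+m+(N+1)n+c,
\end{equation*}
so the $N$ branches meet at $\dollar$ in the common state $q'_{j,0}$ at the same step if and only if $m=n$.

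At $\dollar$ we apply the $N$-way QFT
\begin{equation*}
\delta(q'_{j,0},\dollar)=\frac{1}{\sqrt{N}}\sum_{l=1}^{N}e^{\frac{2\pi i}{N}jl}(p_l,\omega_l),
\end{equation*}
with $\omega_N\in\Omega_a$ the unique accepting register symbol attached to the accepting state $p_N$, and each $\omega_l\in\Omega_r$ ($l<N$) a distinct rejecting symbol. When $m=n$, all $N$ branches reach this QFT simultaneously in identical configurations, the $p_N$-amplitudes add constructively to $1$ while every other target cancels, so the input is accepted with probability $1$. When $m\ne n$, the branches trigger the QFT at pairwise distinct steps (so the per-step measurement halts each branch independently without interference), and each branch contributes at most $\frac{1}{N^{2}}$ to the acceptance probability, summing to at most $N\cdot\frac{1}{N^{2}}=\frac{1}{N}\le\epsilon$.

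The main technical issue is verifying the well-formedness condition $\sum_{\omega}E_{\omega}^{\dagger}E_{\omega}=I$ from Section~2.2. The stalling cycles act as permutations on pairwise disjoint subsets of states, the initial uniform split on $\cent$ and the closing $N$-way QFT on $\dollar$ are standard unitaries, and the state sets $\{q_{j,i}\}$, $\{q'_{j,i}\}$, $\{p_l\}$ are disjoint across $j$, so the columns of the matrix $\mathsf{E}$ remain orthonormal. Care must also be taken that every non-$\dollar$ transition writes the continuing register symbol $\omega_1$, so that the per-step projective measurement of Section~2.2 cannot trigger a premature halting outcome and the acceptance budget is spent exclusively at $\dollar$.
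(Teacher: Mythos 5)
Your proposal is correct and follows essentially the same route as the paper's proof: split into $N$ equal-amplitude paths on $\cent$, have $\mathsf{path}_j$ stall on each $a$ for a $j$-dependent number of steps before the $b$ and an $(N-j+1)$-dependent number after it, so that the arrival time at $\dollar$ is $j(m-n)$ plus a $j$-independent term, and then close with an $N$-way QFT whose distinguished target accepts. Your timing bookkeeping, the $1/N^2$-per-branch accounting in the failure case, and the well-formedness remarks all match the paper's construction in substance.
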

\begin{proof}
	We use a similar technique described in the proof of Lemma \ref{lem:upal-epsilon}.
	Let $ N = \left\lceil \frac{1}{\epsilon} \right\rceil $ and 
	$ \mathcal{M}_{upal(1),N} = (Q,\Sigma,\Omega,\delta,q_{0},\Omega_{a},\Omega_{r}) $ 
	be a 1QFA, where
	$ Q= \{ \stopstate{q_{k}} \mid 0 \le k \le N\} \cup 
	\{ \rightstate{q_{j,1}} \cup \rightstate{p_{j,1}} \mid 1 \leq j \leq N \}
	\cup
	\{ \stopstate{q_{j,k}} \mid 2 \leq k \leq j+1 , 1 \leq j \leq N \}
	\cup
	\{ \stopstate{p_{j,k}} \mid 2 \leq k \leq N-j+2 , 1 \leq j \leq N \} $,
	$ \Omega = \{\omega_{n},\omega_{a},\omega_{r}\} $,
	$ \Omega_{a}=\{\omega_{a}\} $, and $ \Omega_{r} = \{ \omega_{r} \} $.
	The details of $ \delta $ is given in Figure \ref{fig:M-upal(1)-epsilon}. 
	(The missing part of $ \delta $ can be easily be completed.)
	
	We show that $ \mathcal{M}_{upal(1),N} $ recognizes $ L_{upal(1)} $
	with negative one-sided error bound $ \frac{1}{N} $.
	Therefore, by setting $ N = \left\lceil \frac{1}{\epsilon} \right\rceil $,
	we obtain the desired machine.
	
	\begin{figure}[h!]
	\centering
	\scriptsize
	\fbox{
	\begin{minipage}{0.96\textwidth}		
		\begin{equation*}
			\begin{array}{|c|c|}
				\hline
				\multicolumn{2}{|c|}{\cent}
				\\ \hline
				\multicolumn{2}{|l|}{
					\begin{array}{lcl}
						\delta(\rightstate{q_{0}},\cent) & = & \frac{1}{\sqrt{N}} ( \rightstate{q_{j,1}},\omega_{n})
					\end{array}
					}
				\\ \hline \multicolumn{2}{c}{~} \\
				\multicolumn{2}{c}{\mathsf{path_{j}}~(1 \leq j \leq N)}
				\\ \hline				
					 a~(\mbox{\tiny before reading a $ b $}) & a~(\mbox{\tiny after reading a $ b $})
				\\ \hline
					\begin{array}{lcl}						
						\delta(\rightstate{q_{j,1}},a) & = & (\stopstate{q_{j,2}},\omega_{n}) \\
						\delta(\stopstate{q_{j,k}},a) & = & 
							(\stopstate{q_{j,k+1}},\omega_{n}) ~~ (2 \le k < j+1) \\
						\delta(\stopstate{q_{j,j+1}},a) & = & (\rightstate{q_{j,1}},\omega_{n}) \\						
					\end{array}
				&
					\begin{array}{lcl}
						\delta(\rightstate{p_{j,1}},a) & = & (\stopstate{p_{j,2}},\omega_{n}) \\
						\delta(\stopstate{q_{j,k}},a) & = & 
							(\stopstate{p_{j,k+1},a},\omega_{n}) ~~ (2 \le k < N-j+2) \\
						\delta(\stopstate{p_{j,N-j+2}},a) & = & (\rightstate{p_{j,1}},\omega_{n}) \\
					\end{array}
				\\ \hline 
				b & \dollar
				\\ \hline
					\begin{array}{lcl}
						\delta(\rightstate{q_{j,1}},b) & = & (\rightstate{p_{j,1}},\omega_{n}) \\
						\delta(\rightstate{p_{j,1}},b) & = & (\rightstate{p_{j,1}},\omega_{r})
					\end{array}
				&
					\begin{array}{lcl}
						\delta(\rightstate{q_{j,1}},\dollar) & = & (\stopstate{q_{j,2}},\omega_{r}) \\
						\delta(\rightstate{p_{j,1}},\dollar) & = & \frac{1}{\sqrt{N}} 
							\sum\limits_{l = 1}^{N-1} e^{\frac{2 \pi i }{N}jl} (\stopstate{q_{l}},\omega_{r}) +
								\frac{1}{\sqrt{N}} e^{2 \pi i j} (\stopstate{q_{N}},\omega_{a})
					\end{array}
				\\ \hline
			\end{array}
		\end{equation*}			
	\end{minipage}} 
	\caption{The details of the transition function of $ \mathcal{M}_{upal(1),N} $}
	\label{fig:M-upal(1)-epsilon}
	\end{figure}
	
	On symbol $ \cent $, the computation is split into $ N $ different paths, say $ \mathsf{path}_{i} $
	$ (1 \le i \le N) $, with amplitude $ \frac{1}{\sqrt{N}} $.
	If the input does not contain exactly one $ b $, 
	it is rejected in each path.
	We assume the input of the form $ a^{m}ba^{n} $ ($ m,n \geq 0 $) in the remaining part.
	Before (resp., after) reading symbol $ b $, $ \mathsf{path}_{j} $ waits $ j $ (resp., $ N-j+1 $) step(s) 
	on each $ a $, and so,
	$ \mathsf{path}_{j} $ arrives on $ \dollar $ after making $ m(j)+n(N-j+1) $ stationary movements.
	After reading $ \dollar $, each path makes a QFT: the input is accepted in the distinguished 
	target and it is rejected, otherwise.
	
	It can be easily verified that for any $ j_{1} \neq j_{2} $,
	$ \mathsf{path}_{j_{1}} $ and $ \mathsf{path}_{j_{2}} $ arrive on $ \dollar $ 
	simultaneously if and only if $ m=n $, where $ 1 \le j_{1},j_{2} \le N $.
	In other words, each path makes the $ N $-way QFT
	at the same time if and only if $ m=n $. 
	That is, (i) if $ m=n $ (\textit{the succeed case}), all paths are interfered with each other
	and only configuration $ \ket{(q_{N},|w|+2)} $ remains with probability 1 and so the input is accepted exactly;
	(ii) otherwise (\textit{the failure case}), none of the paths is interfered with the others and so 
	the input is accepted with probability at most $ \frac{1}{N} $.
\end{proof}

\begin{theorem}
	\label{thm:upal(t)-epsilon}
	For any $ \epsilon \in (0,\frac{1}{2}) $, language $ L_{upal(t)} $ can be recognized by a 1QFA
	with negative one-sided error bound $ \epsilon $, where $ t>0 $.
\end{theorem}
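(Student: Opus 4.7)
The plan is to generalize the single-pair trick of Lemma \ref{lem:upal(1)-epsilon} to $t$ pairs simultaneously: I would run a tensor of $t$ independent QFT branches in one one-way pass. Set $N=\lceil 1/\epsilon\rceil$ and build a 1QFA whose nontrivial states are indexed by a tuple $(j_1,\ldots,j_t)\in\{1,\ldots,N\}^t$ together with a block counter $i\in\{1,\ldots,2t\}$ and a per-block delay counter. On $\cent$ I would apply a tensor of $t$ $N$-way QFTs, producing a uniform superposition over all $N^{t}$ paths with amplitude $N^{-t/2}$ each. While reading the $i$-th $a$-block with $i\le t$, each $a$ costs $j_i$ internal steps (one advancing move and $j_i-1$ stationary moves, exactly as in the $t=1$ machine); while reading the $(2t-i+1)$-th $a$-block each $a$ costs $N+1-j_i$ steps; each $b$ advances the block counter, and a path whose input violates the template $a^{*}(ba^{*})^{2t-1}$ is shunted into a rejecting trap as in the earlier figures. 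On $\dollar$ I would apply a tensor of $t$ $N$-way QFTs on the $(j_1,\ldots,j_t)$ register and declare only the target $(N,\ldots,N)$ accepting.

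A short calculation shows that the arrival time of path $(j_1,\ldots,j_t)$ at $\dollar$ is
\[
  C \;+\; \sum_{i=1}^{t} j_i\bigl(n_i-n_{2t-i+1}\bigr),
\]
where $C$ depends only on $N$ and the $n_i$'s. If $w\in L_{upal(t)}$, every $d_i:=n_i-n_{2t-i+1}$ vanishes, all $N^{t}$ paths arrive at $\dollar$ in the same step with equal amplitudes, and the tensor QFT concentrates the register on $(N,\ldots,N)$ with amplitude $1$, so $w$ is accepted exactly. If $w\notin L_{upal(t)}$, then some $d_k\neq 0$; the paths partition into arrival-time classes $P_T$, and the linear constraint $\sum_i j_i d_i=T-C$ fixes $j_k$ uniquely once the other $t-1$ coordinates are chosen, giving $|P_T|\le N^{t-1}$. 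Within a single class the phases $e^{2\pi i j_k N/N}=1$ add constructively on the accepting target, producing amplitude $|P_T|/N^{t}$ there; classes at distinct steps cannot further interfere, so the total acceptance probability is at most
\[
  \frac{1}{N^{2t}}\sum_T |P_T|^2 \;\le\; \frac{\max_T|P_T|}{N^{2t}}\sum_T|P_T| \;\le\; \frac{N^{t-1}\cdot N^{t}}{N^{2t}} \;=\; \frac{1}{N} \;\le\; \epsilon.
\]

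The one step that needs genuine care is the $|P_T|\le N^{t-1}$ bound: that is what caps the within-class constructive interference and keeps the acceptance probability at $1/N$ no matter how many pairs mismatch or how the mismatches correlate across pairs. Everything else is bookkeeping, with a state set of size $O(tN^{t+1})$, finite for fixed $t$ and $\epsilon$, and a transition table that is a direct analogue of Figures \ref{fig:M-upal-N} and \ref{fig:M-upal(1)-epsilon}.
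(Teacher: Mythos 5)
Your proof is correct, but it takes a genuinely different route from the paper's. The paper's (sketched) argument is sequential: it splits into $N$ paths on $\cent$ and again at each of the first $t-1$ $b$'s to build the index $(j_1,\ldots,j_t)$, and then performs an $N$-way QFT at \emph{each} $b$ after the midpoint, comparing one pair $(m_{k'},n_{k'})$ per stage and collapsing the rightmost index coordinate each time; since members pass every stage with probability $1$ and a nonmember's surviving amplitude drops to at most $\frac{1}{\sqrt{N}}$ at its first failed stage and never recovers, the error $\frac{1}{N}$ falls out stage by stage with no global interference analysis needed. You instead defer everything to a single terminal tensor QFT and encode all $t$ comparisons into the arrival time $C+\sum_i j_i(n_i-n_{2t-i+1})$, which forces you to prove the counting bound $|P_T|\le N^{t-1}$ on arrival-time classes and then the estimate $N^{-2t}\sum_T|P_T|^2\le (\max_T|P_T|)(\sum_T|P_T|)N^{-2t}=\frac{1}{N}$. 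That extra lemma is the real content of your version and it is right: when some $d_k\neq 0$ the constraint $\sum_i j_i d_i = T-C$ determines $j_k$ from the remaining coordinates, and the phase on the distinguished target $(N,\ldots,N)$ is identically $1$, so within-class interference is purely constructive and your bound is exactly what caps it. What the paper's route buys is modularity and a trivial error analysis (one-sided error composes for free across stages); what yours buys is a single interference step at $\dollar$, no intermediate QFTs on the $b$'s, and a transparent reason why the error is $\frac{1}{N}$ independently of $t$ and of how the mismatches are distributed. One point worth making explicit if you write this up: branches that reach $\dollar$ at different steps halt under distinct measurements (all QFT targets carry $\omega_a$ or $\omega_r$), so their acceptance \emph{probabilities} add while only same-class \emph{amplitudes} add, which is exactly what your $\sum_T|P_T|^2$ expression assumes.
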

\begin{proof}
	(Sketch)
	The proof can be obtained by generalizing the technique presented 
	in the proof of Lemma \ref{lem:upal(1)-epsilon}.
	Suppose that the input is of the form 
	\begin{equation*}
		\label{eq:upal(t)}
		a^{m_{1}} b \cdots b a^{m_{t}} b a^{n_{t}} b \cdots b a^{n_{1}} ~
			(m_{i},n_{i} \geq 0 \mbox{ and } 1 \le i \le t).
	\end{equation*}
	(Otherwise, the input is rejected exactly.)
	The algorithm has $ t $ stages.
	In the first stage, the equality of $ m_{t} $ and $ n_{t} $ are compared. 
	If so, the computation goes to next stage with probability 1.
	Otherwise, the input is rejected with probability $ 1-\frac{1}{N} $ and 
	the computation goes to next stage with probability $ \frac{1}{N} $.
	In the second stage, the equality of $ m_{t-1} $ and $ n_{t-1} $ are compared
	in the same manner. The computation continues in this way and
	in the last stage, the input is accepted instead of going to next stage.
	Therefore, for the members, the input is accepted with probability 1 and for the nonmembers
	the input is accepted with probability at most $ \frac{1}{N} $.	
	Some technical details are given below. Let $ N = \left\lceil \frac{1}{\epsilon} \right\rceil $.
	
	1. On symbol $ \cent $,
			the computation is split into $ N $ paths with equal amplitudes,
			say $ \mathsf{path}_{j_{1}} $ ($ 1 \le j_{1} \le N $).
			After reading the first $ b $, the computation is again split into $ N $ paths with equal amplitudes,
			i.e. $ \mathsf{path}_{j_{1}} $ is split into $ N $ paths 
			$ \mathsf{path}_{j_{1},j_{2}} $  $ (1 \le j_{1},j_{2} \le N) $.
			This process is repeated until reading the $ (t-1)^{th} $ $ b $.
			Thus, after reading the $ (t-1)^{th} $ $ b $, each path has $ t $ indexes,
			i.e. $ \mathsf{path}_{j_{1},\ldots,j_{t}} $ ($ 1 \le j_{k} \le N $ and $ 1 \le k \le t $).
			Note that, any path with index $ (j_{1},j_{2},\ldots,j_{k'}) $ ($ 1 \le k' \le t $)
			is responsible to compare numbers $ m_{k'} $ and $ n_{k'} $.
	
	2.  Before (resp., after) reading the $ t^{th} $ $ b $, if $ j $ is the last one in 
			the index (of the path), then, it waits $ j $ (resp., $ N-j+1 $) steps over each $ a $.

	3. After reading the $ t^{th} $ $ b $, all paths start to make $ N $-way QFT over each $ b $ 
			in order to compare the numbers under their responsibility. After the QFT,
			the computation continues with the paths,
			from which the current paths were created in the previous steps
			(i.e. technically the rightmost one is dropped from the index)
			with probability 1 in the succeed case and with probability $ \frac{1}{N} $
			in the failure case.
			Note that, in the failure case, the computation is terminated and the input is rejected
			with probability $ 1-\frac{1}{N} $.
\end{proof}

\begin{corollary}
	For any $ k \in \mathbb{Z}^{+} $ and $ \epsilon \in (0,\frac{1}{2}) $,
	the class of languages recognized by 1D$ k $FAs is a proper subset of 
	the class of the languages recognized by 1Q$ k $FAs with error bound $ \epsilon $.
\end{corollary}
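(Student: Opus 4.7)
The plan is to obtain the corollary by assembling two ingredients: a trivial inclusion (every 1D$k$FA is a 1Q$k$FA with zero error) and a witness language (some $L_{upal(t)}$) that lies in the quantum class but not in the deterministic one for arbitrary $k$.

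First I would verify the inclusion 1D$k$FA $\subseteq$ 1Q$k$FA (with error bound $0$, hence within any $\epsilon\in(0,\tfrac{1}{2})$). Given a deterministic transition $(q,\sigma)\mapsto(p,d_{1},\ldots,d_{k})$, define the corresponding quantum transition with amplitude $1$ and register symbol $\omega_{1}$. Since distinct source configurations map to distinct target configurations, every column of the matrix $\mathsf{E}$ is a distinct standard basis vector, so the orthonormality condition $\sum_{\omega} E_{\omega}^{\dagger}E_{\omega}=I$ holds automatically, and the accept/reject partition is preserved.

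For properness, I would invoke the discussion preceding Theorem~\ref{thm:upal(t)-epsilon}: by Ku91, for every $k$ there exists a $t>0$ such that $L'_{upal(t)}$ is not recognizable by any 1N$k$FA, and this extends to $L_{upal(t)}$ by the straightforward conversion noted in the paper (a 1N$k$FA for $L_{upal(t)}$ yields one for $L'_{upal(t)}$ by additionally verifying that each $a$-block is nonempty, which any multihead nondeterministic automaton can do without extra heads). Since 1D$k$FA $\subseteq$ 1N$k$FA, this $L_{upal(t)}$ is not in 1D$k$FA. On the other hand, Theorem~\ref{thm:upal(t)-epsilon} furnishes a 1QFA recognizing $L_{upal(t)}$ with negative one-sided error bound $\epsilon$; embedding this 1QFA as a 1Q$k$FA (let the other $k-1$ heads idle on $\cent$ throughout, so the construction of $\mathsf{E}$ just acquires trivial tensor factors) preserves recognition, and a negative one-sided error bound $\epsilon\in(0,\tfrac{1}{2})$ trivially satisfies the general error bound condition with the same $\epsilon$.

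The main (very mild) obstacle is the passage from $L'_{upal(t)}$ to $L_{upal(t)}$ in the lower bound, which the paper already asserts is straightforward; otherwise the argument is pure bookkeeping.
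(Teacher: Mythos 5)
Your overall architecture coincides with the paper's: the corollary is stated without proof immediately after Theorem~\ref{thm:upal(t)-epsilon}, and the intended argument is exactly the combination you give --- the trivial containment of 1D$k$FAs in 1Q$k$FAs, plus the witness $L_{upal(t)}$, with the lower bound of \cite{Ku91} transferred from $L'_{upal(t)}$ to $L_{upal(t)}$ as the paper itself remarks.

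One step in your justification of the trivial containment is wrong as stated. You assign the single register symbol $\omega_{1}$ to every transition and then claim the columns of $\mathsf{E}$ are \emph{distinct} standard basis vectors because ``distinct source configurations map to distinct target configurations.'' A deterministic transition function need not be injective on configurations: two states may move to the same state on the same scanned symbols (a trap state is the obvious example), in which case two columns of $\mathsf{E}$ coincide and the well-formedness condition $\sum_{\omega}E_{\omega}^{\dagger}E_{\omega}=I$ fails. This is precisely the reversibility obstruction that the finite register in the paper's generic template exists to remove: instead of $\omega_{1}$, write a register symbol recording the information being discarded (say $\omega_{(q,\sigma)}$, indexed by the source state and the tuple of scanned symbols), so that columns originating from distinct source configurations land in distinct blocks $E_{\omega}$ and are automatically orthogonal. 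With that repair the containment is the standard one, and the rest of your argument --- the $L'_{upal(t)}$-to-$L_{upal(t)}$ transfer, the idle-heads embedding of the 1QFA of Theorem~\ref{thm:upal(t)-epsilon} into a 1Q$k$FA, and the observation that a negative one-sided error bound $\epsilon\in(0,\frac{1}{2})$ is in particular an error bound $\epsilon$ --- is sound and matches the paper.
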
			
			
In \cite{Fr79}, Freivalds showed that, for any $ \epsilon \in (0,\frac{1}{2}) $, $ L_{eq(t)} $
can be recognized by a rtP1BCA with negative one-sided error bound $ \epsilon $, where $ t>0 $ and
\begin{equation*}
	L_{eq(t)}=\{ w \in \{a_{1},\ldots,a_{t},b_{1},\ldots,b_{t}\}^{*} \mid \forall i \in \{1,\ldots,t\} 
	(|w|_{a_{i}} = |w|_{b_{i}}) \}.
\end{equation*}
In fact, it is not hard to modify the Freivalds' algorithm in order to show that,
for any $ \epsilon \in (0,\frac{1}{2}) $,
rtP1BCA can recognize $ L_{upal(t)} $ with negative one-sided error bound $ \epsilon $.
Moreover, since the task of any counter can be implemented by two heads, we can argue the following result.
\begin{lemma}
	\label{lem:rtP1BCA-by-1P3FA}
	Any rtP1BCA can be exactly simulated by a 1P3FA.
\end{lemma}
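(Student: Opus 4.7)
The plan is to build a 1P3FA $\mathcal{F}$ that simulates the given rtP1BCA $\mathcal{P}$ step by step while encoding the value of $\mathcal{P}$'s counter as the difference between the positions of two auxiliary heads. The first step is to invoke the reduction cited in the footnote of Subsection~2.4 to assume without loss of generality that $\mathcal{P}$ updates its counter only by values in $\{-1,0,1\}$; this is the crucial technical assumption, because it guarantees that every counter update can be matched by at most one unit of head movement of $\mathcal{F}$ in the very same step.

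Next I would define $\mathcal{F}$ with three one-way heads $h_1,h_2,h_3$. Head $h_1$ advances one square per step and plays the role of $\mathcal{P}$'s realtime input head: the probabilistic transitions of $\mathcal{F}$ depend only on the state and on the symbol under $h_1$, and they are copied verbatim (same branches, same probabilities, same state updates) from $\delta_{\mathcal{P}}$. Heads $h_2$ and $h_3$ are used to maintain the invariant
\begin{equation*}
	(\text{position of } h_2) - (\text{position of } h_3) = \text{current counter value of } \mathcal{P}.
\end{equation*}
When the chosen branch of $\delta_{\mathcal{P}}$ prescribes counter update $c\in\{-1,0,1\}$, $\mathcal{F}$ moves $h_2$ right if $c=1$, moves $h_3$ right if $c=-1$, and leaves both fixed if $c=0$. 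Since $\mathcal{P}$ performs at most $|\tilde w|$ counter updates on input $w$, both $h_2$ and $h_3$ stay within the tape throughout the simulation.

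Once $h_1$ reaches $\dollar$, $\mathcal{F}$ switches (by flipping an extra flag kept inside its state) to a deterministic \emph{verification phase} in which $h_1$ is pinned on $\dollar$ and $h_2,h_3$ advance simultaneously one square per step. If both heads land on $\dollar$ in the same step, the counter was zero, and $\mathcal{F}$ accepts precisely when the simulated internal state of $\mathcal{P}$ lies in $Q_a$; if one of them encounters $\dollar$ strictly before the other, the counter was nonzero and $\mathcal{F}$ rejects. Since the simulation is step-for-step and the verification phase is deterministic, the acceptance probability of $\mathcal{F}$ on $w$ coincides exactly with that of $\mathcal{P}$. I expect the most delicate part of the argument, rather than any single hard step, to be the bookkeeping around the $\cent$ and $\dollar$ transitions of $\mathcal{P}$ (which may themselves update both the state and the counter): these have to be folded cleanly into the start-up and into the switch to verification so that the head-difference invariant is preserved up to and including $\mathcal{P}$'s very last counter update.
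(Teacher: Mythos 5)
Your proposal is correct and follows essentially the same route as the paper: one head replays the input-head behaviour of the rtP1BCA, the other two record increments and decrements by moving right, and a final synchronized sweep to $\dollar$ tests whether the counter is zero. The only cosmetic difference is that you make the $\{-1,0,1\}$ normalization and the end-marker bookkeeping explicit, which the paper leaves implicit.
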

\begin{proof}
	Let $ \mathcal{M} $ and $ \mathcal{M'} $ be respectively the given rtP1BCA and the 1P3FA 
	simulating $ \mathcal{M} $.
	The heads of $ \mathcal{M'} $ can be named as follows:
	$ H_{i} $ is the head simulating the input head of $ \mathcal{M} $ and 
	$ H_{1} $ and $ H_{2} $ are responsible to implement the blind counter of $ \mathcal{M} $.
	The input is sequentially read by $ H_{i} $ as $ \mathcal{M} $ does and
	for any increment (resp., decrement) operation on the counter, 
	$ H_{1} $ (resp., $ H_{2} $) moves one square to the right. 
	When $ H_{i} $ reads the right end-marker and enters an accepting state,
	both $ H_{1} $ and $ H_{2} $ are tested whether they are on the same square or not
	(they start to travel towards to the right end-marker ($ \dollar $) 
	with the same speed and	the test is passed if they read $ \dollar $ simultaneously).
	If so, the input is accepted, otherwise, it is rejected.
\end{proof}

\begin{theorem}
	\label{thm:upal(t)-1P3FA}
	For any $ \epsilon \in (0,\frac{1}{2}) $, language $ L_{upal(t)} $ can be recognized by a 1P3FA
	with negative one-sided error bound $ \epsilon $, where $ t>0 $.
\end{theorem}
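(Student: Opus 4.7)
The plan is to combine the remark preceding Lemma~\ref{lem:rtP1BCA-by-1P3FA} (that a modified Freivalds algorithm yields a rtP1BCA for $L_{upal(t)}$ with negative one-sided error $\epsilon$) with Lemma~\ref{lem:rtP1BCA-by-1P3FA} itself. So I would first explicitly build the rtP1BCA in the spirit of Freivalds, and then immediately invoke the simulation result.

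First, I would use the finite control to verify that the input has the form $a^* b a^* b \cdots b a^*$ with exactly $2t-1$ occurrences of $b$, rejecting outright otherwise, and to track the current block index among $\{1, \ldots, 2t\}$. Second, while scanning $\cent$, I would pick a random coefficient vector $(r_1, \ldots, r_t) \in \{-m, \ldots, m\}^t$ uniformly and store it in the state, where $m = \lceil (1/\epsilon - 1)/2 \rceil$ is fixed; since $t$ and $m$ are constants, this is done by a fixed probabilistic branching at the start. Third, I would use the counter to accumulate $\sum_{i=1}^{t} r_i (m_i - n_i)$, by incrementing the counter by $r_i$ on each $a$ in the $i$-th block for $1 \le i \le t$ and decrementing by $r_i$ on each $a$ in the $(2t-i+1)$-th block; the allowance of counter updates from $\{-m, \ldots, m\}$ is exactly the convention adopted for rtQ/P1BCA earlier in the paper. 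Fourth, upon reading $\dollar$ from the unique accepting state, I would accept iff the blind counter is zero.

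The analysis is straightforward: for $w \in L_{upal(t)}$ we have $m_i = n_i$ for all $i$, so the final counter value is $0$ regardless of $(r_i)$ and the input is accepted with probability $1$; for $w \notin L_{upal(t)}$ the integer vector $(m_i - n_i)$ is nonzero, and by a standard univariate Schwartz--Zippel-style argument the probability that $\sum_i r_i (m_i - n_i) = 0$ for a uniform $(r_i) \in \{-m, \ldots, m\}^t$ is at most $1/(2m+1) \le \epsilon$. Finally, I would apply Lemma~\ref{lem:rtP1BCA-by-1P3FA} to convert the resulting rtP1BCA into a 1P3FA recognizing $L_{upal(t)}$ with negative one-sided error bound $\epsilon$.

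The main obstacle is really the amplification: the naive strategy of probabilistically picking one pair $(m_i, n_i)$ to test only yields one-sided error $1 - 1/t$, and a realtime machine with a single blind counter cannot re-run such a test independently many times in a single pass. The random linear combination trick dissolves this obstacle by compressing all $t$ equality tests into a single zero-check of a randomly weighted sum, which is the only substantive adaptation of Freivalds' argument needed here.
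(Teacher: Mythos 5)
Your proposal is correct and follows exactly the route the paper intends: the paper leaves this theorem without an explicit proof, deriving it from the preceding remark that Freivalds' rtP1BCA algorithm can be modified to recognize $L_{upal(t)}$ with negative one-sided error $\epsilon$, combined with Lemma~\ref{lem:rtP1BCA-by-1P3FA}. Your random-linear-combination construction (accumulating $\sum_i r_i(m_i-n_i)$ in the blind counter and zero-testing at the end) is a valid instantiation of that modification, and the error analysis and the final appeal to the simulation lemma are sound.
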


\begin{corollary}
	For any $ k \geq 3 $ and $ \epsilon \in (0,\frac{1}{2}) $,
	the class of languages recognized by 1D$ k $FAs is a proper subset of 
	the class of the languages recognized by 1P$ k $FAs with error bound $ \epsilon $.
\end{corollary}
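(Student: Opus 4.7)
The plan is to prove both directions of the claimed strict inclusion. First, any 1D$k$FA is a 1P$k$FA in which every transition has probability $1$, so the inclusion 1D$k$FA $\subseteq$ 1P$k$FA is immediate (under any error bound $\epsilon$), and the real task reduces to exhibiting, for each $k \geq 3$, a separating witness that is recognized with error $\epsilon$ by a 1P$k$FA but not recognized by any 1D$k$FA.

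As witness I would take $L_{upal(t)}$ with a parameter $t$ depending on $k$. For the positive side, Theorem \ref{thm:upal(t)-1P3FA} already gives, for the specified $\epsilon$, a 1P3FA recognizing $L_{upal(t)}$ with negative one-sided error $\epsilon$; by the conversion noted in Section~2.5 of the excerpt, this can then be placed within the general error-bound regime. Since $k \geq 3$, a 1P3FA is a special case of a 1P$k$FA: simply park the extra $k-3$ heads on $\cent$ and have the transition function ignore them throughout the simulation.

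For the negative side, I would invoke the result that the excerpt quotes from \cite{Ku91}: for the given $k$, there exists $t > 0$ such that $L'_{upal(t)}$ is not recognized by any 1N$k$FA. The excerpt further argues that this nonrecognizability transfers to $L_{upal(t)}$, since any 1N$k$FA recognizing $L_{upal(t)}$ can be turned into one recognizing $L'_{upal(t)}$ by composing it with a finite-state filter that rejects inputs containing $\cent b$, $bb$, or $b\dollar$. Because every 1D$k$FA is in particular a 1N$k$FA, this choice of $t$ yields $L_{upal(t)} \notin $ 1D$k$FA for the given $k$.

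Taking a single $t$ that is large enough for the Kuty{\l}owski bound at parameter $k$, the language $L_{upal(t)}$ then sits in 1P$k$FA but not in 1D$k$FA, establishing the strict inclusion for every $k \geq 3$ and every $\epsilon \in (0,\frac{1}{2})$. The only mildly delicate step is the reduction between $L_{upal(t)}$ and $L'_{upal(t)}$ on the nondeterministic side, which is the sole place the argument is not quite mechanical; however, because the forbidden patterns are all detectable by the finite control alone, no extra head budget is spent and the lower bound transfers directly, so the argument goes through.
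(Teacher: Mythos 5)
Your proposal is correct and follows essentially the same route the paper intends: the positive side comes from Theorem~\ref{thm:upal(t)-1P3FA} (with the extra $k-3$ heads idle), and the negative side comes from the quoted result of \cite{Ku91} for $L'_{upal(t)}$, transferred to $L_{upal(t)}$ by intersecting with the regular set of inputs having no empty $a$-block, together with 1D$k$FA $\subseteq$ 1N$k$FA. The details you fill in (head-parking, the finite-state filter, and the passage from negative one-sided error $\epsilon$ to general error bound $\epsilon$) are exactly the steps the paper leaves implicit.
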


By using $ t $ heads, it is not hard to show that a 1QFA can recognize language $ L_{neq(t)} $
with any error bound less than $ \frac{1}{3} $, where
\begin{equation*}
	L_{neq(t)}=\{ w \in \{a_{1},\ldots,a_{t},b_{1},\ldots,b_{t}\}^{*} \mid \forall i \in \{1,\ldots,t\} 
	(|w|_{a_{i}} \neq |w|_{b_{i}}) \}.
\end{equation*}

\begin{question}
	What is the minimum number of heads required by a 1PFA in order to recognize $ L_{neq(t)} $
	with an error bound less than $ \frac{1}{3} $?
\end{question}

By using the techniques described in Section \ref{sec:nondeterminism}, we can show that
$ L_{neq(t)} $ is a member of NQAL, where $ t>1 $.

\begin{conjecture}
	For any $ k > 1 $, there exists a $ t > 0 $ such that $ L_{neq(t)} $ cannot be recognized
	by any 1N$ k $FA.
\end{conjecture}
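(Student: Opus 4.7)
The plan is to attack the conjecture through a crossing-sequence style lower bound for 1N$k$FAs, exploiting the rigid left-to-right motion of each head. Fix $k > 1$ and suppose for contradiction that some 1N$k$FA $M$ with $s$ states recognizes $L_{neq(t)}$, where $t$ will be chosen very large compared with $s$ and $k$ (the natural first guess is $t > k/2$, reflecting the fact that two heads suffice to verify a single pair $(a_i, b_i)$ by matching occurrences). First I would restrict attention to the block-structured test inputs
\begin{equation*}
	w_{\vec{n},\vec{m}} = a_1^{n_1} b_1^{m_1} a_2^{n_2} b_2^{m_2} \cdots a_t^{n_t} b_t^{m_t}
\end{equation*}
with each $n_i, m_i \in \{N, N+1\}$ for a large integer $N$. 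On such an input, $w_{\vec{n},\vec{m}} \in L_{neq(t)}$ iff $n_i \neq m_i$ for every $i$, so the $4^t$ candidate inputs split into $2^t$ positive and $4^t - 2^t$ negative instances on which $M$ must disagree.

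Next I would attach to any accepting computation of $M$ on such an input a block-by-block \emph{crossing profile}: for each of the $2t$ block boundaries and each of the $k$ heads, record the internal state at which that head crosses, together with the order in which the crossings occur. Since $M$ is one-way, each head crosses each boundary at most once, so the profile has bounded combinatorial type. The central technical claim to establish is that the number of distinct profiles $M$ can realise on a single block $a_i^{n_i} b_i^{m_i}$, viewed as a function of $(n_i, m_i) \in \{N, N+1\}^2$, is bounded by some $f(s,k)$ independent of $N$; this should follow from a pumping-style argument inside the homogeneous subregions $a_i^{n_i}$ and $b_i^{m_i}$. A multidimensional pigeonhole/Ramsey argument applied across the $t$ blocks then produces two inputs --- one in $L_{neq(t)}$ and one outside --- whose accepting computations agree on every boundary profile and differ in exactly one block; splicing the two computations yields an accepting computation of $M$ on a non-member, the desired contradiction.

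The main obstacle, and presumably the reason this is phrased only as a conjecture, is handling the nondeterminism uniformly across all $t$ blocks. An accepting computation on a positive input $w$ commits to one particular sequence of guesses, and the splice at block $i$ must preserve acceptance while the remaining $t-1$ blocks continue to use their original guesses; this requires the fooling inputs to agree on profiles \emph{compatibly}, not merely individually. In particular, one has to rule out strategies in which heads are nondeterministically re-synchronised at each boundary and thereby recycled across blocks --- exactly the trick by which two heads easily dispatch the $t=1$ case. Quantifying the information flow across each boundary tightly enough to defeat such recycling, most likely via a $k$-head adaptation of Sudborough-style hierarchy arguments for 1N$k$FAs, seems to be the crux, and is where I expect the genuine difficulty to lie.
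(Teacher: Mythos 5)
This statement is stated in the paper only as a conjecture: the paper offers no proof of it, so there is nothing to compare your argument against, and your proposal has to stand entirely on its own. As written it does not: the two load-bearing claims --- that the number of realisable crossing profiles per block is bounded by some $f(s,k)$ independent of $N$, and that two profile-compatible accepting computations can be spliced into an accepting computation on a hybrid word --- are exactly the hard part, and you concede yourself that the nondeterministic re-synchronisation of heads across boundaries is unresolved. A sketch whose crux is explicitly left open is not a proof of an open conjecture.

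There is also a concrete flaw that would sink the plan even if the splicing machinery were worked out: your fooling family is the wrong one. On the block-structured words $a_1^{n_1}b_1^{m_1}\cdots a_t^{n_t}b_t^{m_t}$, membership in $L_{neq(t)}$ reduces to the $t$ \emph{local, adjacent} comparisons $n_i \neq m_i$, and a 1D$2$FA already decides this correctly for every $t$: park one head at the start of the $a_i$-segment and the other at the start of the $b_i$-segment, advance them in lockstep, reject if they leave their segments simultaneously, then re-synchronise both heads at the next block and repeat. Since a $2$-head deterministic machine is correct on your entire test set, no contradiction can be extracted from it for any $k \geq 2$; the restriction of a correct 1N$k$FA to these inputs is perfectly consistent. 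The hardness of $L_{neq(t)}$, if it exists, must come from inputs in which the $a_i$'s and $b_i$'s to be compared are far apart --- e.g. $a_1^{n_1}\cdots a_t^{n_t}b_1^{m_1}\cdots b_t^{m_t}$, the arrangement underlying the cited lower bounds for $L'_{upal(t)}$ and $L_{twin(t)}$ --- where each comparison ties up a pair of head positions and the $\binom{k}{2}$ bottleneck appears. Relatedly, your guessed threshold $t > k/2$ is too small: $k$ one-way heads can service $\binom{k}{2}$ pairs by re-pairing, so any successful choice of $t$ should be at least of that order.
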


\begin{conjecture}
	$ L_{gt} = \{ w \in \{a,b\}^{*} \mid |w|_{a} > |w|_{b} > 0 \} $ 
	cannot be recognized by a 1QFA with bounded error?
\end{conjecture}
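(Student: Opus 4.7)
The plan is to assume for contradiction that some 1QFA $\mathcal{M}$ recognizes $L_{gt}$ with error $\epsilon\in(0,\tfrac{1}{2})$, specialize to inputs of the form $a^{m}b^{n}$ with $m,n\ge 1$, and derive a contradiction from the fact that the acceptance probability $f_{\mathcal{M}}$ must jump from at least $1-\epsilon$ to at most $\epsilon$ as $n$ crosses $m$, simultaneously for every $m\ge 1$.

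First I would lift the generic template of Section 2 to a superoperator description. Let $\mathcal{E}_{a}$ and $\mathcal{E}_{b}$ be the completely positive, trace-non-increasing ``continue'' superoperators corresponding to one read of $a$ and of $b$, and let $\alpha_{a}$, $\alpha_{b}$, $\beta$ be the positive functionals giving the accept probability at an $a$-read, at a $b$-read, and at $\$$, respectively. Writing $\sigma_{m}=\mathcal{E}_{a}^{m}(\rho_{0})$ for the sub-normalized state after reading $\cent a^{m}$, one obtains
\[
f_{\mathcal{M}}(a^{m}b^{n})=\sum_{k=0}^{m-1}\alpha_{a}(\mathcal{E}_{a}^{k}\rho_{0})+\sum_{j=0}^{n-1}\alpha_{b}(\mathcal{E}_{b}^{j}\sigma_{m})+\beta(\mathcal{E}_{b}^{n}\sigma_{m}),
\]
and the recognition hypothesis forces $f_{\mathcal{M}}(a^{m+1}b^{m})-f_{\mathcal{M}}(a^{m}b^{m})\ge 1-2\epsilon$ for every $m\ge 1$.

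The next step is a spectral-and-averaging argument. A CP trace-non-increasing map on a finite-dimensional space has semisimple peripheral spectrum and strictly contractive interior, so $\xi:=(\mathcal{E}_{a}-\mathcal{I})\rho_{0}$ has no eigenvalue-$1$ component, and the Ces\`aro sum $\tfrac{1}{M}\sum_{m=1}^{M}\mathcal{E}_{a}^{m}\xi$ telescopes to $\tfrac{1}{M}(\mathcal{E}_{a}^{M+1}\rho_{0}-\mathcal{E}_{a}\rho_{0})$ and tends to zero as $M\to\infty$. Combined with the uniform trace bounds $\sum_{k\ge 0}\alpha_{a}(\mathcal{E}_{a}^{k}\rho_{0})\le 1$ and $\sum_{j\ge 0}\alpha_{b}(\mathcal{E}_{b}^{j}\sigma_{m})\le 1$, I expect the Ces\`aro average of the gap $f_{\mathcal{M}}(a^{m+1}b^{m})-f_{\mathcal{M}}(a^{m}b^{m})$ over $m\in\{1,\ldots,M\}$ to tend to zero, contradicting its uniform lower bound $1-2\epsilon>0$.

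The chief obstacle is that $\mathcal{E}_{a}$ and $\mathcal{E}_{b}$ generally do not commute, so the diagonal quantities $\beta(\mathcal{E}_{b}^{m}\sigma_{m})$ and $\alpha_{b}(\mathcal{E}_{b}^{j}\sigma_{m})$ depend on $m$ through both the base state $\sigma_{m}=\mathcal{E}_{a}^{m}\rho_{0}$ and the varying evolution depth $j\le m$. The telescoping of $\sigma_{m+1}-\sigma_{m}$ has to survive the application of varying powers $\mathcal{E}_{b}^{j}$, and that forces a joint spectral analysis of the pair $(\mathcal{E}_{a},\mathcal{E}_{b})$ along the diagonal $n=m$, or equivalently an Ambainis--Nayak style bilinear-form lower bound ruling out that a fixed finite-dimensional bilinear form in $(\mathcal{E}_{a}^{m},\mathcal{E}_{b}^{n})$ can exhibit a sharp threshold at $n=m$ simultaneously for infinitely many $m$. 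Formalizing this joint control is the technical heart of the conjecture, which is why it is stated as an open problem rather than as a theorem.
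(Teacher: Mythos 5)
The statement you are trying to prove is posed in the paper as a \emph{conjecture} (an open problem), not a theorem, so there is no proof in the paper to compare against; the question is only whether your argument actually closes the problem, and it does not. Your setup is sound: modelling the 1QFA by superoperators $\mathcal{E}_a,\mathcal{E}_b$ with accept functionals, restricting to $a^mb^n$, and observing that $f_{\mathcal{M}}(a^{m+1}b^m)-f_{\mathcal{M}}(a^mb^m)\ge 1-2\epsilon$ for all $m\ge 1$ is a legitimate starting point, and the Ces\`aro/telescoping control of the pure-$a$ prefix (the terms $\sum_k\alpha_a(\mathcal{E}_a^k\rho_0)$ and the average of $\mathcal{E}_a^m\xi$) is fine.

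The gap is exactly where you place it, and it is fatal rather than merely technical. The difference of acceptance probabilities contains the terms $\sum_{j=0}^{m-1}\alpha_b\bigl(\mathcal{E}_b^j(\sigma_{m+1}-\sigma_m)\bigr)$ and $\beta\bigl(\mathcal{E}_b^m(\sigma_{m+1}-\sigma_m)\bigr)$. Here $\sigma_{m+1}-\sigma_m=\mathcal{E}_a^m\xi$ is Hermitian but not positive semidefinite, so the uniform trace bound $\sum_j\alpha_b(\mathcal{E}_b^j\,\cdot\,)\le 1$ that you invoke simply does not apply to it; the naive bound grows like $m$ times the trace norm, which destroys the Ces\`aro limit. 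Moreover $\mathcal{E}_a^m\xi$ is only small \emph{in Ces\`aro mean}, not pointwise (peripheral eigenvalues $e^{i\theta}\ne 1$ survive every individual power), and the operator $\mathcal{E}_b^m$ applied on top varies with $m$, so the averaging over $m$ cannot be pushed past the non-commuting $\mathcal{E}_b$-powers. What is missing is precisely a joint spectral or information-theoretic statement about the pair $(\mathcal{E}_a,\mathcal{E}_b)$ along the diagonal $n=m$ -- the ``Ambainis--Nayak style bilinear-form lower bound'' you gesture at -- and no such lemma is supplied. Since you acknowledge this yourself in your final paragraph, your text should be read as a plausible attack strategy on the conjecture, not as a proof; the conjecture remains open after your argument.
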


\begin{question}
	What is the minimum number of heads required by a 1QFA (or a 1PFA) in order to recognize $ L_{gt(t)} $
	with an error bound less than $ \frac{1}{3} $, where $ t>1 $ and
	$ L_{gt(t)}=\{ w \in \{a_{1},\ldots,a_{t},b_{1},\ldots,b_{t}\}^{*} \mid \forall i \in \{1,\ldots,t\} 
		(|w|_{a_{i}} > |w|_{b_{i}}) \} $?
\end{question}

% ssssssssssssssssssssssssssssssssssssssssssssssssssssssssssssssssssssssssssssssss %
% ssssssssssssssssssssssssssssssssssssssssssssssssssssssssssssssssssssssssssssssss %
\subsection{Multihead pushdown automata}
% ssssssssssssssssssssssssssssssssssssssssssssssssssssssssssssssssssssssssssssssss %
% ssssssssssssssssssssssssssssssssssssssssssssssssssssssssssssssssssssssssssssssss %

It was shown in \cite{CL88} that $ L_{twin(t)} $, namely \textit{twin languages}, 
cannot be recognized by a 1NPD$ k $A
if and only if $ t > \scriptsize \left( \begin{array}{cc} k \\ 2 \end{array} \right) $, where
$ t>0 $, $ k > 1 $, and
\begin{equation*}
	L_{twin(t)} = \{ w_{1} c \cdots c w_{t} c w_{t} c \cdots c w_{1} \mid w_{i} \in \{a,b\}^{*}, 1 \le i \le t \}.
\end{equation*}
Note that, $ L_{twin(t)} $ can be recognized by a 1D$ k $FA whenever 
$ t \leq \scriptsize \left( \begin{array}{cc} k \\ 2 \end{array} \right) $ \cite{Ro66} and so
for this language neither nondeterminism nor a pushdown storage is helpful.

\begin{theorem}
	\label{thm:L-twin(2t)-1PkFA}
	$ L_{twin(2t)} $ can be recognized by a 1P$ k $FA
	with negative one-sided error bound $ \frac{1}{2} $,
	where $ t = \scriptsize \left( \begin{array}{cc} k \\ 2 \end{array} \right) $,
	$ t>0 $, $ k > 1 $.
\end{theorem}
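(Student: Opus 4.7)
(Plan)
The plan is to turn a single random bit into twice the deterministic checking power. Writing $t=\binom{k}{2}$, Rosenberg's construction \cite{Ro66} gives a 1D$k$FA recognizing $L_{twin(t)}$: the $\binom{k}{2}$ pairs of heads are bijectively assigned to the $t$ twin pairs, and a carefully chosen head-scheduling ensures that for each pair of heads the two heads arrive at the two copies of their assigned twin word simultaneously and scan them in parallel. I would use this as a subroutine for comparing any prescribed set of $t$ twin pairs among the $2t$ twin pairs of an input to $L_{twin(2t)}$.

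On symbol $\cent$ the 1P$k$FA tosses a fair coin, splitting into two branches with probability $\frac{1}{2}$ each. Both branches first check deterministically that the input has the required block structure (exactly $4t-1$ occurrences of $c$ with only $a$'s and $b$'s in between). Branch $A$ then runs a Rosenberg-style 1D$k$FA that compares the outer twin pairs indexed $1,\ldots,t$, while branch $B$ runs one that compares the inner twin pairs indexed $t+1,\ldots,2t$.

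The correctness analysis is then immediate. If $w\in L_{twin(2t)}$ then every twin pair matches, both branches accept, and $f_{\mathcal{M}}(w)=1$. If $w\notin L_{twin(2t)}$ then some index $i_{0}$ witnesses a mismatch, and the branch whose set contains $i_{0}$ rejects with probability $1$, so $f_{\mathcal{M}}(w)\le\frac{1}{2}$; this is the required negative one-sided error bound.

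The main obstacle will be adapting Rosenberg's head-scheduling to an input that contains $2t$ twin pairs when only $t$ are actively being compared: the two heads assigned to compare a twin pair must still reach their designated blocks at exactly the same step, even though uncompared blocks of arbitrary lengths intervene. I expect this can be handled by a straightforward modification in which, while a head is not involved in an active comparison, it either waits at a block boundary or traverses the intervening blocks in lockstep with its partner so that the relative offsets required by Rosenberg's schedule are preserved. The fact that both the outer and inner index sets are contiguous and nested should make this adaptation direct.
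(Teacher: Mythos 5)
Your proposal is correct and follows essentially the same route as the paper's proof: split into two equiprobable branches on $\cent$, have one branch deterministically compare the pairs indexed $1,\ldots,t$ and the other the pairs indexed $t+1,\ldots,2t$ using the $\binom{k}{2}$ head-pair construction of \cite{Ro66}, so members are accepted with probability $1$ and nonmembers rejected with probability at least $\frac{1}{2}$. The scheduling adaptation you flag is real but routine, and the paper likewise leaves it implicit.
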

\begin{proof}
	We assume the input of the form  -- if not, it is rejected exactly --
	\begin{equation*}
		w_{1} c \ldots w_{2t} c u_{2t} c \ldots c u_{1} ~
		(w_{i},u_{i} \in \{a,b\}^{*}, 1 \leq i \leq 2t).
	\end{equation*}	
	At the beginning, the computation is split into two branches, 
	say $ \mathsf{branch_{1}} $ and $ \mathsf{branch_{2}} $, with probability $ \frac{1}{2} $.
	By using $ k $ heads, the pairs $ (w_{1},u_{1}),\ldots,(w_{t},u_{t}) $ 
	and $ (w_{t+1},u_{t+1}),\ldots,(w_{2t},u_{2t}) $
	are compared deterministically in $ \mathsf{branch_{1}} $ and $ \mathsf{branch_{2}} $,
	respectively.
\end{proof}

\begin{corollary}
	The class of the languages recognized by 1D2FAs is a proper subset of 
	the class of the languages recognized by 1P2FAs with error bound $ \frac{1}{3} $.
\end{corollary}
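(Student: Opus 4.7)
The plan is to amplify the cited deterministic 1D$k$FA construction for $L_{twin(t)}$ (valid whenever $t \leq \binom{k}{2}$) by a single initial coin flip. Since $L_{twin(2t)}$ requires matching twice the number of block-pairs that $k$ deterministic heads can handle in one pass, the natural strategy is to randomly partition the verification work into two halves of size $t$ each, paying a factor $\frac{1}{2}$ in the acceptance probability on non-members.

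First I would have the machine verify deterministically that the input has the expected shape $w_1 c w_2 c \cdots c w_{2t} c u_{2t} c \cdots c u_1$ (correct number of $c$-separators, correct alphabet inside each block), rejecting with probability $1$ otherwise. Immediately on $\cent$, the machine flips a fair coin and spawns branches $\mathsf{branch}_1$ and $\mathsf{branch}_2$, each with probability $\frac{1}{2}$. In $\mathsf{branch}_1$, the $k$ heads execute the deterministic matching procedure on the outer $t$ pairs $(w_1,u_1),\ldots,(w_t,u_t)$; in $\mathsf{branch}_2$, they execute the analogous procedure on the inner $t$ pairs $(w_{t+1},u_{t+1}),\ldots,(w_{2t},u_{2t})$. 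A branch accepts iff every one of its $t$ assigned pairs matches.

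For correctness, if $w \in L_{twin(2t)}$ then every pair matches, so both branches accept and the overall acceptance probability equals $1$. If $w \notin L_{twin(2t)}$ then at least one pair $(w_i,u_i)$ is mismatched and the branch owning that pair rejects with certainty, while the other branch contributes at most $\frac{1}{2}$ to the acceptance probability, meeting the required negative one-sided error bound.

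The main technical point is the head scheduling inside $\mathsf{branch}_2$: because the input is one-way, each head must silently traverse the first $t$ $w$-blocks (counting $c$-separators in the finite control) before engaging its role on the middle $2t$ blocks, and must similarly stop once its assigned $u_i$ ends. I expect this to be the only delicate step, but it amounts to a routine one-way simulation since only a bounded number of separators must be tracked and the underlying pair-comparison mechanism is exactly the one from the cited $L_{twin(t)}$ construction.
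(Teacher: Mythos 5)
Your construction is exactly the paper's proof of Theorem~\ref{thm:L-twin(2t)-1PkFA}: split into two equiprobable branches on $\cent$ and let each branch deterministically verify $t$ of the $2t$ pairs with the $k$ heads, giving acceptance probability $1$ for members and at most $\frac{1}{2}$ for non-members; combined with the cited lower bound that $L_{twin(2)}$ is not recognizable by a 1D2FA (indeed not even by a 1NPD2A) and the standard conversion of negative one-sided error $\frac{1}{2}$ to error bound $\frac{1}{3}$, this yields the corollary just as in the paper. The proposal is correct and takes essentially the same approach.
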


\begin{corollary}
	For any $ k \in \mathbb{Z}^{+} $,
	the class of languages recognized by 1DPD$ k $As is a proper subset of 
	the class of the languages recognized by 1QPD$ k $As (1PPD$ k $As) with error bound $ \frac{1}{3} $.		
\end{corollary}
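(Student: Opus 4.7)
The plan is to derive this corollary by combining Theorem~\ref{thm:L-twin(2t)-1PkFA} with the Chrobak--Li lower bound cited immediately above it. The containment direction is immediate: a 1DPD$k$A is the special case of a 1QPD$k$A (respectively, 1PPD$k$A) in which every transition amplitude (respectively, probability) lies in $\{0,1\}$, so the deterministic class is contained without error in both classes.

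For the properness direction I would take $k \geq 2$, set $t = \binom{k}{2}$, and use $L_{twin(2t)}$ as the witness. Any 1P$k$FA that leaves its pushdown inactive is a 1PPD$k$A, and under the ``modern'' quantum definitions adopted in this paper a bounded-error 1PPD$k$A can be lifted to a 1QPD$k$A with the same error profile; hence Theorem~\ref{thm:L-twin(2t)-1PkFA} yields recognizers in both classes with negative one-sided error bound $\frac{1}{2}$, which converts to general error bound $\frac{1}{3}$ via the conversion remarked in Subsection~2.5. On the deterministic side, $2t = 2\binom{k}{2} > \binom{k}{2}$, so the \cite{CL88} result quoted above Theorem~\ref{thm:L-twin(2t)-1PkFA} rules out every 1NPD$k$A---and hence every 1DPD$k$A---for $L_{twin(2t)}$, completing the separation.

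The case $k = 1$ is not covered by this witness because $\binom{1}{2} = 0$, and here I plan to fall back to $L_{upal}^*$. By Theorem~\ref{thm:upal-star-epsilon} it is recognized by a rtQ1BCA with negative one-sided error bound $\frac{1}{2}$, and since a blind counter is a restricted pushdown this lifts to a 1QPDA recognizer with general error bound $\frac{1}{3}$, while $L_{upal}^*$ is a classical example of a context-free but not deterministic-context-free language and therefore cannot be accepted by any 1DPDA. The main obstacle I expect is the 1PPDA side at $k=1$: Theorem~\ref{thm:upal-star-epsilon} is purely quantum, so a separate construction or reference would be needed to exhibit a non-DCFL recognized by a 1PPDA with bounded error; for $k \geq 2$ the argument above handles the probabilistic and quantum halves uniformly, and it is in that regime that I expect the intended short proof to live.
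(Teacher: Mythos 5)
Your $k \ge 2$ argument is correct and is exactly the route the paper intends: Theorem~\ref{thm:L-twin(2t)-1PkFA} places $L_{twin(2t)}$ with $t=\binom{k}{2}$ in the bounded-error 1PPD$k$A and 1QPD$k$A classes (the $k$ heads do all the work, the stack stays idle, and negative one-sided error $\frac{1}{2}$ converts to general error $\frac{1}{3}$ as noted in Subsection~2.5), while the \cite{CL88} bound with $2t>\binom{k}{2}$ excludes every 1NPD$k$A and a fortiori every 1DPD$k$A.

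The genuine gap is your $k=1$ fallback: $L_{upal}^{*}$ is \emph{not} a witness, because it is a deterministic context-free language. A 1DPDA pushes a symbol per $a$, pops per $b$, and checks that the stack returns to the bottom marker exactly at each $ba$ boundary and at $\dollar$; this is entirely deterministic. The Greibach result \cite{Gr78} that the paper invokes for $L_{upal}^{*}$ rules out deterministic \emph{blind counter} machines, which cannot inspect their counter --- a pushdown automaton can inspect its stack, which is precisely why the hardness does not transfer. So your $k=1$ separation collapses. The witness available in the paper for $k=1$ on the quantum side is $L_{twin}=L_{twin(1)}$: it is not context-free (so no 1DPDA, indeed no 1NPDA, accepts it), and \cite{YFSA11A} --- quoted right after the corollary, and recovered from Theorem~\ref{thm:L-twin(t)-rtQPDA} with $t=1$ --- gives a rtQPDA recognizing it with negative one-sided error $\frac{1}{2}$. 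You are right that the parenthetical 1PPDA claim at $k=1$ cannot be settled with $L_{twin}$ (the paper itself calls that open); a witness that does work is Freivalds' $L_{eq(2)}$ \cite{Fr79}, which is recognized by a rtP1BCA with arbitrarily small negative one-sided error, hence by a 1PPDA (a single blind counter is simulated exactly by a stack over $\{+,-\}$), yet is not context-free since intersecting it with $a_{1}^{*}a_{2}^{*}b_{1}^{*}b_{2}^{*}$ yields $\{a_{1}^{m}a_{2}^{n}b_{1}^{m}b_{2}^{n}\}$.
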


It is an open problem whether $ L_{twin} (= L_{twin(1)}) $ can be recognized by a 1PPDA with bounded error
\cite{YFSA11A}. Therefore, it is interesting to ask the following question.
\begin{question}
	For a given $ t >0 $,
	let $ k $ be the minimum integer such that $ L_{twin(t)} $ is recognized by a 1P$ k $FA 
	with an error bound at most $ \frac{1}{3} $.
	Is there any $ k' < k $ such that $ L_{twin(t)} $ can be recognized by a 1PPD$ k' $A
	with error bound $ \frac{1}{3} $?
\end{question}

It was shown in \cite{YFSA11A} that $ L_{twin} $ can be recognized by a rtQPDA with
negative one-sided bounded error $ \frac{1}{2} $. 
Therefore, a quantum machine can make one more comparison of a pair by using a pushdown storage 
for any twin language.

\begin{corollary}
	\label{cor:L-twin(t)-1QkPDA}
	For a given $ t >0 $,
	let $ k $ be the minimum number such that $ L_{twin(t)} $ is recognized by a 1P$ k $FA 
	with negative one-sided error bound $ \frac{1}{2} $.
	Then, $ L_{twin(t)} $ can be recognized by a 1QPD$ (k\mbox{-}1) $A  with negative one-sided
	error bound $ \frac{1}{2} $.
\end{corollary}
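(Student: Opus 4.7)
The plan is to mimic the two-branch probabilistic strategy of Theorem \ref{thm:L-twin(2t)-1PkFA} but to off-load one pair comparison per branch onto the pushdown store, using the quantum technique from [YFSA11A] that recognizes $L_{twin}$ on a rtQPDA with negative one-sided error bound $\frac{1}{2}$. By hypothesis $k$ is the minimum head count of a 1P$k$FA recognizing $L_{twin(t)}$ with error $\frac{1}{2}$; the natural witnessing algorithm is the one of Theorem \ref{thm:L-twin(2t)-1PkFA}, which splits into two equally-likely branches and uses all $k$ heads in each branch to verify a fixed set of $\binom{k}{2}$ pair-equalities deterministically, together covering all $t \leq 2\binom{k}{2}$ pairs.

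First, I would instantiate the quantum analogue: generate a balanced two-branch superposition from $\cent$, so that each branch carries amplitude $\tfrac{1}{\sqrt{2}}$. Second, within each branch I would designate one distinguished pair $(w_i,w_i)$ to be compared via a pushdown subroutine that faithfully implements the [YFSA11A] construction: push the first occurrence of $w_i$ while reading the first half, and, when the symmetric occurrence is reached in the second half, run the quantum comparison that accepts matching strings with probability $1$ and rejects non-matching strings with probability at least $\tfrac{1}{2}$. Third, the remaining $k-1$ heads in the branch are deployed exactly as in Rosenberg's deterministic $k$-head algorithm to verify $\binom{k-1}{2}$ further pairs of symmetric substrings. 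The two branches' head assignments together with the two pushdown-checked pairs are chosen so that every one of the $t$ pairs is verified in at least one branch; combined with the choice to have both branches push-down-check the same distinguished pair, one can cover the full pair set under the head budget forced by the minimality of $k$.

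The error analysis then proceeds by cases on an input $w \notin L_{twin(t)}$ with mismatch set $S \neq \emptyset$. If some mismatched pair is deterministically checked by the $k-1$ heads of a branch, that branch rejects with probability $1$; if a mismatched pair coincides with the branch's pushdown-checked pair, that branch accepts with probability at most $\tfrac{1}{2}$; otherwise the branch accepts with probability $1$. Writing the total accepting probability as $\tfrac{1}{2}(p_1+p_2)$, the coverage condition above forces $p_1+p_2 \leq 1$ in every case, giving the required bound $\tfrac{1}{2}$. For $w\in L_{twin(t)}$, all deterministic checks pass and both pushdown comparisons accept with probability $1$, so the input is accepted with probability $1$.

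The main obstacle I expect is the combinatorial scheduling: showing that with only $2\binom{k-1}{2}+1$ distinct pairs verifiable across the two branches (since both branches share the single pushdown-checked pair), one still covers all $t$ pairs of $L_{twin(t)}$ under the assumption that $k$ is the minimum head count for the 1P$k$FA algorithm. This will likely require either invoking the specific structure of the minimal 1P$k$FA algorithm (which uses exactly two branches) or, if $t$ forces $2\binom{k-1}{2}+1 < t$, extracting additional pair-comparison capability from the pushdown by interleaving several independent quantum-fingerprint tests within each branch's pushdown phase. Making that step precise---and in particular confirming that the [YFSA11A] pushdown subroutine composes correctly with the deterministic multihead subroutine within a single well-formed quantum transition---is the genuine technical heart of the argument.
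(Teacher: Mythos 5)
Your reconstruction follows exactly the route the paper itself intends: the corollary is stated with no proof beyond the preceding remark that the rtQPDA of \cite{YFSA11A} for $L_{twin}$ lets a quantum machine ``make one more comparison of a pair by using a pushdown storage,'' and your two-branch design ($\binom{k-1}{2}$ deterministic head-comparisons per branch plus a pushdown-comparison) is the natural formalization of that remark. Your error analysis is in fact more careful than the paper's: you correctly observe that a mismatched pair checked only by one branch's pushdown would be accepted with probability $\frac{1}{2}\left(\frac{1}{2}+1\right)=\frac{3}{4}$, so under the error bound $\frac{1}{2}$ a pushdown-checked pair must either also be head-checked in the other branch (contributing nothing new) or be pushdown-checked in both branches, which caps the total coverage at $2\binom{k-1}{2}+1$ distinct pairs.

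The obstacle you flag at the end is, however, a genuine gap and not a schedulability technicality --- and the paper does not resolve it either. Dropping one head shrinks each branch's deterministic coverage from $\binom{k}{2}$ to $\binom{k-1}{2}$, a loss of $k-1$ pairs per branch, while the pushdown restores at most one pair in total; since $2\binom{k-1}{2}+1<2\binom{k}{2}$ for every $k\ge 2$, and the minimality of $k$ only guarantees that $t$ exceeds what $k-1$ heads alone can handle (it does not force $t\le 2\binom{k-1}{2}+1$), the counting does not close. Already the smallest nontrivial case exposes this: for $t=2$ the minimum is $k=2$ by Theorem \ref{thm:L-twin(2t)-1PkFA}, and the corollary then asserts that a one-head quantum pushdown automaton verifies \emph{two} nested pairs with negative one-sided error $\frac{1}{2}$, which the cited construction (one pair only) does not provide. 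Closing the gap requires an additional idea that neither you nor the paper supplies: either showing that the \cite{YFSA11A} comparison composes over the LIFO-nested pairs of $L_{twin(t)}$ --- members still accepted with probability $1$, any single mismatch rejected with probability at least $\frac{1}{2}$, and the stack restored between comparisons --- or exploiting quantum interference between the two branches. As it stands, your argument (like the paper's remark) fully justifies the statement only for $t=1$.
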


\begin{question}
	Is the class of languages recognized by 1PPD$ k $As with error bound $ \frac{1}{3} $
	is properly contained in the class of languages recognized by 1QPD$ k $As with error bound $ \frac{1}{3} $?
\end{question}

If we allow the error bound bigger than $ \frac{1}{3} $, we can obtain the following results.

\begin{theorem}
	\label{thm:L-twin(t)-rtQPDA}
	$ L_{twin(t)} $ can be recognized by a rtQPDA
	with negative one-sided error bound $ 1-\frac{1}{2t} $, where $ t>0 $.
\end{theorem}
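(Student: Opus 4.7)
The plan is to invoke the $L_{twin(1)}$ subroutine provided by \cite{YFSA11A} (a rtQPDA recognising $\{wcw \mid w \in \{a,b\}^*\}$ with negative one-sided error $\frac{1}{2}$) and run $t$ parallel copies of it in quantum superposition, one for each of the $t$ pairs to be checked. First, a simple deterministic pre-check on internal states verifies that the input has the form
\[
	w_1 c w_2 c \cdots c w_t c u_t c \cdots c u_2 c u_1, \quad w_i, u_i \in \{a,b\}^*,
\]
and rejects with certainty otherwise; in the remainder of the argument the input is assumed to be of this form.

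On reading $\cent$, the machine splits into $t$ orthogonal branches $\mathsf{branch}_j$ ($1 \le j \le t$), each with amplitude $\frac{1}{\sqrt{t}}$, and each branch uses its own dedicated pool of internal states throughout the computation so that branches never share a configuration. In $\mathsf{branch}_j$ the subroutine for $L_{twin(1)}$ is applied to the ``virtual subinput'' $w_j c u_j$: the blocks $w_1, \ldots, w_{j-1}$ (and their separators) are consumed without touching the stack; the subroutine's left-half transitions are executed while reading $w_j$; on the following $c$ the subroutine is transitioned into its middle state; the intervening blocks $w_{j+1}, \ldots, w_t, u_t, \ldots, u_{j+1}$ (and separators) are consumed without any stack operation, so that the pushed contents of $w_j$ are preserved; the subroutine's right-half transitions are executed while reading $u_j$; finally, the blocks $u_{j-1}, \ldots, u_1$ are consumed trivially and the subroutine's final measurement is applied.

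Since branches occupy disjoint sets of internal states at every step, cross-branch configurations remain mutually orthogonal throughout the run and never interfere. Therefore the overall acceptance probability decomposes as $\sum_{j=1}^{t} \frac{1}{t}\, p_j$, where $p_j$ is the acceptance probability contributed by $\mathsf{branch}_j$. By the cited property of the subroutine, $p_j = 1$ whenever $w_j = u_j$ and $p_j \le \frac{1}{2}$ otherwise. If the input lies in $L_{twin(t)}$, then $p_j = 1$ for all $j$ and the input is accepted exactly. If the input does not lie in $L_{twin(t)}$, the acceptance probability is maximised when exactly one index $j^*$ yields $w_{j^*} \neq u_{j^*}$, giving
\[
	\frac{t-1}{t} \cdot 1 + \frac{1}{t} \cdot \frac{1}{2} = 1 - \frac{1}{2t}.
\]

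The main technical obstacle I expect is justifying that the $L_{twin(1)}$ subroutine continues to behave correctly when embedded in the larger computation with a long ``inert'' middle phase in which the stack is untouched: one must verify that extending the subroutine by identity-like transitions over the inert blocks preserves the well-formedness condition on the matrix $\mathsf{E}$ from Section 2, and that the private-state partition indeed suppresses interference at the final projective measurement. Both checks reduce to straightforward bookkeeping within the generic template.
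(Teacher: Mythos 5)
Your proposal is correct and follows essentially the same route as the paper: the paper's proof also selects an index $i \in \{1,\ldots,t\}$ with probability $\frac{1}{t}$ at the outset and then runs the rtQPDA subroutine for $L_{twin}$ from \cite{YFSA11A} on the pair $(w_i,u_i)$, yielding the same worst-case acceptance bound $1-\frac{1}{2t}$ for non-members. Your version merely spells out the branching as an orthogonal superposition and the inert middle phase in more detail than the paper does.
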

\begin{proof}
	We assume the input of the form -- if now, it is rejected exactly --
	\begin{equation*}
		\label{eq:twin(t)}
		w_{1} c \cdots c w_{t} c u_{t} c \cdots c u_{1}
			~ ( w_{i},u_{i} \in \{a,b\}^{*}, 1 \le i \le t).	
	\end{equation*}
	An integer, say $ i $, is selected from the set $ \{1,\ldots,t\} $ 
	with probability $ \frac{1}{t} $ at the beginning.
	Then, the substrings $ w_{i} $ and $ u_{i} $ are compared by the rtQPDA algorithm
	for $ L_{twin} $ given in \cite{YFSA11A}.
\end{proof}
\begin{theorem}
	\label{thm:L-twin(t)-1P2FA}
	$ L_{twin(t)} $ can be recognized by a 1P2FA (or 1Q2FA)
	with negative one-sided error bound $ 1-\frac{1}{t} $, where $ t>0 $.
\end{theorem}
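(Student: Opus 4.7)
The plan is to adapt the strategy of Theorem~\ref{thm:L-twin(t)-rtQPDA}, replacing the pushdown storage used there with the second one-way input head available to a 1P2FA (or 1Q2FA). Since for a string in $L_{twin(t)}$ the second half repeats each $w_i$ in the same left-to-right order (not reversed), two one-way heads suffice to test equality of a single pair $(w_i, u_i)$ by synchronous left-to-right scanning; no backtracking and no auxiliary storage is needed.

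Concretely, I would carry out the algorithm in the following phases. First, a deterministic single-head pass checks that the input has the form $w_1 c \cdots c w_t c u_t c \cdots c u_1$ with $w_i, u_i \in \{a,b\}^*$ and exactly $2t-1$ occurrences of $c$; any deviation triggers rejection with probability $1$. Second, an index $i \in \{1,\ldots,t\}$ is sampled uniformly with probability $\tfrac{1}{t}$ (in the 1P2FA case) or a uniform quantum superposition $\tfrac{1}{\sqrt{t}} \sum_{i=1}^{t} \ket{i}$ over disjoint families of internal states is prepared (in the 1Q2FA case). Third, the two heads are independently advanced rightward, counting the $c$'s, until $H_{1}$ is on the first symbol of $w_{i}$ (after $i-1$ leading $c$'s) and $H_{2}$ is on the first symbol of $u_{i}$ (after $2t-i$ leading $c$'s); independent motion is permitted because each head's move is specified separately by the transition function. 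Finally, the two heads are advanced in lockstep while the scanned symbols are compared at each step; the selected branch accepts iff every comparison matches and both heads encounter the next $c$ (or $\dollar$) simultaneously.

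For the error analysis, if $w \in L_{twin(t)}$ then every branch accepts with certainty, so $f_{\mathcal{M}}(w) = 1$. If $w \notin L_{twin(t)}$ has the correct format then some pair $(w_j, u_j)$ differs, and the branch indexed by $j$ rejects with certainty; hence $f_{\mathcal{M}}(w) \le 1 - \tfrac{1}{t}$, which matches the claimed bound. For the 1Q2FA version, because the $t$ branches are launched into mutually orthogonal internal states and their computations are never recombined by a further quantum Fourier transform, there is no interference, and the per-branch accepting probabilities simply add as in the probabilistic case. The main technical obstacle I anticipate is verifying that the initial superposition, the independent head-positioning phase, and the synchronous comparison phase together satisfy the well-formedness constraint of Section~2 (the orthonormality of the columns of $\mathsf{E}$); this is essentially bookkeeping, since at every step the live branches occupy disjoint subspaces of the configuration space and the intra-branch dynamics is a permutation of basis states.
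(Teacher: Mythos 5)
Your proposal is correct and takes essentially the approach the paper intends: the theorem is stated without proof as the evident analogue of Theorem~\ref{thm:L-twin(t)-rtQPDA}, with the uniformly chosen index $i$ selecting a single pair $(w_i,u_i)$ that is then compared deterministically by two one-way heads (positioned after $i-1$ and $2t-i$ occurrences of $c$, exactly as you describe) instead of by a pushdown store. One small repair: a one-way machine cannot spend a separate single-head ``pass'' on format checking before the comparison, since that head cannot return to the left; but the format condition is regular and can be verified by the finite control while $H_{2}$ traverses the whole input, so this is only a presentational fix and does not affect correctness.
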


\textbf{Acknowledgment.}
We thank Juraj Hromkovi\v{c} for his helpful answers to our questions regarding blind counters.

% BBBBBBBBBBBBBBBBBBBBBBBBBBBBBBBBBBBBBBBBBBBBBBBBBBBBBBBBBBBBBBBBBBBBBBBBBBBBBBBB %
% BBBBBBBBBBBBBBBBBBBBBBBBBBBBBBBBBBBBBBBBBBBBBBBBBBBBBBBBBBBBBBBBBBBBBBBBBBBBBBBB %
\bibliographystyle{plain}
\bibliography{YakaryilmazSay}
% BBBBBBBBBBBBBBBBBBBBBBBBBBBBBBBBBBBBBBBBBBBBBBBBBBBBBBBBBBBBBBBBBBBBBBBBBBBBBBBB %
% BBBBBBBBBBBBBBBBBBBBBBBBBBBBBBBBBBBBBBBBBBBBBBBBBBBBBBBBBBBBBBBBBBBBBBBBBBBBBBBB %

\newpage

\appendix

% SSSSSSSSSSSSSSSSSSSSSSSSSSSSSSSSSSSSSSSSSSSSSSSSSSSSSSSSSSSSSSSSSSSSSSSSSSSSSSSS %
% SSSSSSSSSSSSSSSSSSSSSSSSSSSSSSSSSSSSSSSSSSSSSSSSSSSSSSSSSSSSSSSSSSSSSSSSSSSSSSSS %
% SSSSSSSSSSSSSSSSSSSSSSSSSSSSSSSSSSSSSSSSSSSSSSSSSSSSSSSSSSSSSSSSSSSSSSSSSSSSSSSS %
\section{Proof of Theorem \ref{thm:L-ijk}} \label{app:thm:L-ijk}
% SSSSSSSSSSSSSSSSSSSSSSSSSSSSSSSSSSSSSSSSSSSSSSSSSSSSSSSSSSSSSSSSSSSSSSSSSSSSSSSS %
% SSSSSSSSSSSSSSSSSSSSSSSSSSSSSSSSSSSSSSSSSSSSSSSSSSSSSSSSSSSSSSSSSSSSSSSSSSSSSSSS %
% SSSSSSSSSSSSSSSSSSSSSSSSSSSSSSSSSSSSSSSSSSSSSSSSSSSSSSSSSSSSSSSSSSSSSSSSSSSSSSSS %

\textbf{Theorem \ref{thm:L-ijk}.} $ L_{ijk} $ is in NQAL.
\begin{proof}
	By tensoring two GFAs (see Page 147 in \cite{Pa71})
	$ \mathcal{G}_{1} = (Q_{1},\Sigma,\{A_{\sigma \in \Sigma} \},v_{0},f) $
	and 
	$ \mathcal{G}_{2} = (Q_{2},\Sigma, $ $\{B_{\sigma \in \Sigma} \},u_{0},g) $,
	we obtain a new GFA $ \mathcal{G'} $ ($ \mathcal{G}_{1} \otimes \mathcal{G}_{2} $), specified as
	\begin{equation*}
		\mathcal{G'} =  (Q_{1} \times Q_{2},\Sigma, \{ A_{\sigma} \otimes B_{\sigma} \mid \sigma \in \Sigma \},
		v_{0} \otimes u_{0}, f \otimes g) ,
	\end{equation*}
	such that for any $ w \in \Sigma $, 
	\begin{equation*}
		f_{\mathcal{G'}} (w) = f_{\mathcal{G}_{1}} (w) f _{\mathcal{G}_{2}} (w).
	\end{equation*}
	
	Let $ \Sigma=\{a,b,c\} $ be the input alphabet. We design a GFA $ \mathcal{G}_{a-b} $ to calculate the
	value of ($ |w|_{a}-|w|_{b} $) as its accepting value for any $ w \in \Sigma^{*} $, i.e.
	\begin{equation*}
		\mathcal{G}_{a-b} = (Q,\Sigma,\{A_{\sigma \in \Sigma}\},v_{0},f ),
	\end{equation*}
	where $ Q=\{q_{1},q_{2}\} $, $ v_{0} = ( 0~~1 )^{T} $, $ f = (1 ~~ 0 ) $, and
	\begin{equation*}
		A_{a} = \left( \begin{array}{rr}
			1 & 1 \\
			0 & 1 
		\end{array} \right),
		~~
		A_{b} = \left( \begin{array}{rr}
			1 & -1 \\
			0 & 1
		\end{array} \right),
		~~
		A_{c} = \left( \begin{array}{rr}
			1 & 0 \\
			0 & 1
		\end{array} \right).
	\end{equation*}
	At the beginning, the values of $ q_{1} $ and $ q_{2} $ are 0 and 1, respectively. 
	Whenever an $ a $ (resp., a $ b $) is read,
	the value of $ q_{1} $ (resp., $ q_{2} $) is increased (resp., decreased) by 1. 
	At the end, the value of $ q_{1} $ is assigned as the accepting value.
	That is, $ f_{\mathcal{G}_{a-b}} (w) = |w|_{a}-|w|_{b} $.
	
	Similarly, we can design two GFAs $ \mathcal{G}_{a-c} $ and $ \mathcal{G}_{b-c} $ to calculate the
	values of ($ |w|_{a}-|w|_{c} $) and ($ |w|_{b}-|w|_{c} $) as their accepting values,
	respectively, for any $ w \in \Sigma^{*} $.
	
	Moreover, we can design a GFA $ \mathcal{G}_{a^{+}b^{+}c^{+}} $ to assign 1 as the accepting value
	for the strings of the form $ a^{+}b^{+}c^{+} $ and 0, otherwise:
	\begin{equation*}
		\mathcal{G}_{a^{+}b^{+}c^{+}} = (Q,\Sigma,\{A_{\sigma \in \Sigma}\},v_{0},f )
	\end{equation*}
	where $ Q=\{q_{1},q_{2},q_{3},q_{4}\} $, $ v_{0} = ( 1~~0~~0~~0 )^{T} $, 
	$ f = (0 ~~ 0 ~~ 0 ~~ 1 ) $, and
	\begin{equation*}
		A_{a} = \left( \begin{array}{rrrr}
			0 & 0 & 0 & 0 \\
			1 & 1 & 0 & 0 \\
			0 & 0 & 0 & 0 \\
			0 & 0 & 0 & 0
		\end{array} \right),
		~~
		A_{b} = \left( \begin{array}{rrrrr}
			0 & 0 & 0 & 0 \\
			0 & 0 & 0 & 0 \\
			0 & 1 & 1 & 0 \\
			0 & 0 & 0 & 0
		\end{array} \right),
		~~
		A_{c} = \left( \begin{array}{rrrrr}
			0 & 0 & 0 & 0 \\
			0 & 0 & 0 & 0 \\
			0 & 0 & 0 & 0 \\
			0 & 0 & 1 & 1
		\end{array} \right).
	\end{equation*}
	
	Now, we can obtain a GFA $ \mathcal{G}_{ijk} $ for $ L_{ijk} $ as
	\begin{equation*}
		\mathcal{G}_{ijk} = \mathcal{G}_{a^{+}b^{+}c^{+}} 
		\otimes 
		\left( \mathcal{G}_{a-b} \otimes \mathcal{G}_{a-b} \right)
		\otimes
		\left( \mathcal{G}_{a-c} \otimes \mathcal{G}_{a-c} \right)
		\otimes
		\left( \mathcal{G}_{b-c} \otimes \mathcal{G}_{b-c} \right),
	\end{equation*}
	which calculates the value of 
	\begin{equation*}
		(|w|_{a}-|w|_{b})^{2} (|w|_{a}-|w|_{c})^{2} (|w|_{b}-|w|_{c})^{2}
	\end{equation*}
	for the strings of the form $ a^{+}b^{+}c^{+} $ and
	returns 0, otherwise.
	In other words, 
	$ f_{\mathcal{G}_{ijk}} (w) $ is a positive integer if $ w $ is a member of $ L_{ijk} $ and it is zero if
	$ w $ is not a member of $ L_{ijk} $.
\end{proof}

% SSSSSSSSSSSSSSSSSSSSSSSSSSSSSSSSSSSSSSSSSSSSSSSSSSSSSSSSSSSSSSSSSSSSSSSSSSSSSSSS %
% SSSSSSSSSSSSSSSSSSSSSSSSSSSSSSSSSSSSSSSSSSSSSSSSSSSSSSSSSSSSSSSSSSSSSSSSSSSSSSSS %
% SSSSSSSSSSSSSSSSSSSSSSSSSSSSSSSSSSSSSSSSSSSSSSSSSSSSSSSSSSSSSSSSSSSSSSSSSSSSSSSS %
\section{Proof of Theorem \ref{thm:rtDkBCA}} \label{app:thm:rtDkBCA}
% SSSSSSSSSSSSSSSSSSSSSSSSSSSSSSSSSSSSSSSSSSSSSSSSSSSSSSSSSSSSSSSSSSSSSSSSSSSSSSSS %
% SSSSSSSSSSSSSSSSSSSSSSSSSSSSSSSSSSSSSSSSSSSSSSSSSSSSSSSSSSSSSSSSSSSSSSSSSSSSSSSS %
% SSSSSSSSSSSSSSSSSSSSSSSSSSSSSSSSSSSSSSSSSSSSSSSSSSSSSSSSSSSSSSSSSSSSSSSSSSSSSSSS %

\textbf{Theorem \ref{thm:rtDkBCA}.} 
If  $ L $ is recognized by a rtD$ k $BCA, then $ \overline{L} \in $NQAL, where $ k>0 $.

\begin{proof}
	Without the loose of generality, we can assume that the counter operation(s) of a rtD$ k $BCA, 
	say $ \mathcal{D} $,
	can be determined by the internal state to be entered after each transition.
	Thus, the transitions of $ \mathcal{D} $ can be defined from $ Q \times \tilde{\Sigma} $ to $ Q $,
	where $ Q = \{q_{1},\ldots,q_{n}\} $ is the set of the internal state and $ n>0 $,
	where $ q_{1} $ is the initial state.
	Equivalently, for each $ \sigma \in \tilde{\Sigma} $, we can define a matrix (transition matrix),
	say $ T_{\sigma} $,
	whose columns and rows are indexed by the internal states such that
	the $ (j,i)^{th} $ entry of $ T_{\sigma} $ represents the transition value from state $ q_{i} $ to $ q_{j} $.
	Due to its deterministic nature, these transition matrices are (left) stochastic having zero-one 
	stochastic columns.
	
	The state-transition of $ \mathcal{D} $ can be linearized. For this purpose, we define the following 
	components:
	\begin{itemize}
		\item $ Q_{a} $ is the set of accepting states,
		\item $ v_{0} = (1~~0~~\cdots~~0)^{T} $ is an $ n $-dimensional column vector, and
		\item $ f $ is an $ n $-dimensional row vector such that $ f[i] = 1 $ if $ q_{i} \in Q_{a} $
		and $ f[i] = 0 $ if $ q_{i} \notin Q_{a} $, where $ 1 \le i \le n $.
	\end{itemize}
	That is, for a given input string $ w \in \Sigma^{*} $, the state-transition of $ \mathcal{D} $ is traced
	by a (stochastic) column vector, i.e.
	\begin{equation*}
		v_{i} = T_{\tilde{w}_{i}} v_{i-1}
	\end{equation*}
	and 
	\begin{equation*}
		v_{| \tilde{w} |} = T_{\dollar} T_{w_{|w|}} \cdots T_{w_{1}} T_{\cent} v_{0},
	\end{equation*}
	where $ 1 \le i \le |\tilde{w}| $. 
	It can be easily verified that if $ \mathcal{D} $ enters to $ q_{j} $ at the end of the computation
	if and only if $ v_{\tilde{w}}[j]=1 $, where $ 1 \le j \le n $. 
	(Note that, each intermediate $ v_{i} $ is also a 
	stochastic zero-one vector, where $ 1 \le i \le |\tilde{w}| $.)
	
	Let $ p_{l} $ be the $ l^{th} $ prime ($ 1 \le l \le k $). 
	In the above schema, the counter operations of $ \mathcal{D} $ can be simulated by using a simple
	number-theoretic method:
	when the $ l^{th} $ counter of $ \mathcal{D} $ is updated by $ 1 $ 
	(resp., $ 0 $ or $ -1 $), the nonzero entry of $ v_{i} $ is updated by multiplying with 
	$ p_{l} $ (resp., 1 or $ \frac{1}{p_{l}} $),
	where $ 1 \le l \le k $ and $ 1 \le i \le |\tilde{w}| $.
	This method can be embedded into the transition matrices.
	That is, if the value(s) of counter(s) is (are) updated with respect to 
	$ c \in \{-1,0,1\}^{k} $ when entering state $ q_{j} \in Q $,
	in each $ T_{\sigma \in \tilde{\Sigma}} $,
	the nonzero entries on the $ j^{th} $ row is replaced with
	\begin{equation*}
		\prod_{l=1}^{k} (p_{l})^{c[l]}.
	\end{equation*}
	We denote updated matrices as $ T'_{\sigma \in \tilde{\Sigma}} $.
	\begin{equation}
		\label{eq:case}
		\mbox{ 
		\small
		\begin{minipage}{0.85\textwidth}
			Suppose that the value(s) of the counter(s) is (are) $ C \in \mathbb{Z}^{k} $
			at the end of the computation and the computation ends in $ q_{j} \in Q $ on input $ w \in \Sigma^{*} $.
			Then, it can be verified in a straightforward way that  
			\begin{equation*}
				v_{|\tilde{w}|}[j] = \prod_{l=1}^{k} (p_{l})^{C[l]},
			\end{equation*}
			which is 1 if and only if each counter value is zero.
		\end{minipage} 
		}
	\end{equation}	
	Let $ T''_{\sigma \in \tilde{\Sigma}} $ be $ (n+1) \times (n+1) $-dimensional matrices obtained
	form $ T'_{\sigma} $ as
	\begin{equation*}
		T''_{\sigma} = \left( \begin{array}{c|c}
			& 0 \\
			T'_{\sigma} & \vdots \\
			& 0 \\
			\hline
			0 \cdots 0 & 1
		\end{array} \right).
	\end{equation*}
	We design a GFA $ \mathcal{G} $ based on $ \mathcal{D} $ as follows:
	\begin{equation*}
		\mathcal{G} = (Q',\Sigma,\{A_{\sigma \in \Sigma}\}, v'_{0},f'),
	\end{equation*}
	where $ Q'= Q \cup \{q_{n+1}\} $, 
	\begin{equation*}
		v'_{0} = T''_{\cent} \left( \begin{array}{r}
			1 \\ 0 \\ \vdots \\ 0 \\ -1
		\end{array} \right),
		~~
		f' = ( f \mid 1 ) T''_{\dollar},
		\mbox{ and }
		A_{\sigma} = T''_{\sigma}.
	\end{equation*}
	Hence, by using the scenario given in (\ref{eq:case}), we can verify that
	\begin{itemize}
		\item if $ q_{j} \in Q_{a} $, then $ f_{\mathcal{G}}(w) = \left( \prod_{l=1}^{k} (p_{l})^{C[l]} \right) -1 $,
			i.e.
			\begin{itemize}
				\item $ f_{\mathcal{G}}(w) = 0  $ if each counter value is zero and
				\item $ f_{\mathcal{G}}(w) \neq 0  $ if at least one counter value is not zero;
			\end{itemize}
		\item if $ q_{j} \notin Q_{a} $, then $ f_{\mathcal{G}}(w) = -1 $.
	\end{itemize}
	
	Let $ L $ be the language recognized by $ \mathcal{D} $ and 
	$ \mathcal{G}^{2} = \mathcal{G} \otimes \mathcal{G} $.
	Then, for $ w \in L $ (resp., $ w \notin L $), $ f_{\mathcal{G}^{2}}(w) = 0 $
	(resp., $ f_{\mathcal{G}^{2}}(w) > 0 $).
	Thus, $ \overline{L} \in $ NQAL.
\end{proof}

\end{document}